\documentclass[12pt, draftclsnofoot, onecolumn]{IEEEtran}
\ifCLASSINFOpdf

\else

\fi
\usepackage{amsmath}
\usepackage{makeidx}  
\usepackage{algorithm}
\usepackage{algorithmic}
\usepackage{graphicx}
\usepackage{subfigure}
\usepackage{epstopdf}
\usepackage{bm}
\usepackage{bbding}
\usepackage{cite}
\usepackage{stfloats}

\usepackage{amssymb}
\setcounter{tocdepth}{3}
\usepackage{graphicx}

\usepackage{url}
\newtheorem{proposition}{Proposition}
\usepackage{tabularx,booktabs}
\newcolumntype{C}{>{\centering\arraybackslash}X} 
\setlength{\extrarowheight}{3pt}
\usepackage{lipsum}

\usepackage{makecell} 

\usepackage{graphicx}
\usepackage{color}
\newtheorem{thm}{Theorem}
\newtheorem{rem}{Remark}
\newtheorem{lem}{Lemma}
\newtheorem{proof}{proof}

\addtolength{\textwidth}{.13 in}%
\addtolength{\evensidemargin}{-.13in}%
\textheight 9.5in
\voffset -0.2in

\hyphenation{op-tical net-works semi-conduc-tor}

\begin{document}
\linespread{1.48}

\title{Active IRS Aided Multiple Access for Energy-Constrained IoT Systems}

\author{Guangji~Chen,\IEEEmembership{}
        Qingqing~Wu,\IEEEmembership{}
        Chong~He,\IEEEmembership{}
        Wen~Chen,\IEEEmembership{}
        Jie~Tang,\IEEEmembership{}
        and Shi~Jin\IEEEmembership{}
        \thanks{G. Chen and Q. Wu are with the State Key Laboratory of Internet of Things for Smart City, University of Macau, Macao 999078, China (email: guangjichen@um.edu.mo; qingqingwu@um.edu.mo). C. He and W. Chen are with the Department of Electronic Engineering, Shanghai
        Jiao Tong University, Shanghai 200240, China (e-mail: hechong@sjtu.edu.cn; wenchen@sjtu.edu.cn).  J. Tang is with the School of
        Electronic and Information Engineering, South China University of Technology, Guangzhou 510641, China (email: eejtang@scut.edu.cn). S. Jin is with the National Mobile Communications
        Research Laboratory, Southeast University, Nanjing 210096, China
        (email: jinshi@seu.edu.cn).}
         }


\maketitle
\vspace{-35pt}
\begin{abstract}
In this paper, we investigate the fundamental multiple access (MA) scheme in an active intelligent reflecting surface (IRS) aided energy-constrained Internet-of-Things (IoT) system, where an active IRS is deployed to assist the uplink transmission from multiple IoT devices to an access point (AP). Our goal is to maximize the sum throughput by optimizing the IRS beamforming vectors across time and resource allocation. To this end, we first study two typical active IRS aided MA schemes, namely time division multiple access (TDMA) and non-orthogonal multiple access (NOMA), by analytically comparing their achievable sum throughput and proposing corresponding algorithms. Interestingly, we prove that given only one available IRS beamforming vector, the NOMA-based scheme generally achieves a larger throughput than the TDMA-based scheme, whereas the latter can potentially outperform the former if multiple IRS beamforming vectors are available to harness the favorable time selectivity of the IRS. To strike a flexible balance between the system performance and the associated signaling overhead incurred by more IRS beamforming vectors, we then propose a general hybrid TDMA-NOMA scheme with user grouping, where the devices in the same group transmit simultaneously via NOMA while devices in different groups occupy orthogonal time slots. By controlling the number of groups, the hybrid TDMA-NOMA scheme is applicable for any given number of IRS beamforming vectors available. Despite of the non-convexity of the considered optimization problem, we propose an efficient algorithm based on alternating optimization, where each subproblem is solved optimally. Simulation results illustrate the practical superiorities of the active IRS over the passive IRS in terms of the coverage extension and supporting multiple energy-limited devices, and demonstrate the effectiveness of our proposed hybrid MA scheme for flexibly balancing the performance-cost tradeoff.
\end{abstract}

\begin{IEEEkeywords}
Active intelligent reflecting surface (IRS), multiple access (MA), IRS beamforming, resource allocation, throughput maximization.
\end{IEEEkeywords}


\IEEEpeerreviewmaketitle
\section{Introduction}
With the rapid development of Internet-of-Thighs (IoT) technologies, the unprecedented proliferation of electronic tablets, wearable devices, and mobile sensors is set to continue, which hastens a variety of IoT applications such as smart transportation, smart metering, and smart cities \cite{8879484}. In a typical IoT system, multiple sensor devices connect with an access point (AP) to form a wireless data collection system, which has been widely deployed in various practical applications, i.e., event detection for emergency services, external environment monitoring, wireless surveillance for public safety, etc., \cite{lin2017survey}. For these applications, the low-cost devices gather a large volume data from specific areas and then send it to the AP for further processing. However, the limited battery capacity of IoT devices is one of critical issues due to their practical size and cost constraints, which fundamentally limits their information uploading capabilities \cite{7516608, chi2019energy}.

To overcome the aforementioned limitations, new cost-effective wireless techniques have to be developed for assisting data transmission of IoT systems. Recently, intelligent reflecting surface (IRS) has emerged as a promising technology for enhancing the spectral efficiency and energy efficient of next generation wireless networks in a cost-effective way \cite{8910627, 9326394, 9424177}. Generally, IRS technologies mainly involve two types of architectures, namely passive IRS and active IRS. In particular, passive IRS is a digitally-controlled meta-surface composed of a large number of low-cost passive elements in tuning the phase shifts of the incident signals. With the proper design of the phase shifts of each element, IRS is capable of enhancing the signal reception at the desired destinations and/or mitigating the interference to unintended users, thereby artificially establishing favorable propagation conditions without requiring any RF chains. The fundamental power scaling law of passive IRS was firstly unveiled in \cite{8811733}, which demonstrated that passive IRS can provide an asymptotic squared-power gain in terms of received power at users via passive beamforming. The above advantages of passive IRS have then inspired an intensive research interest in optimizing IRS phase shifts for different wireless communication system setups, such as multi-cell cooperation \cite{pan2020multicell, 9279253, xie2020max}, physical layer security \cite{yu2020robust, guan2020intelligent, hu2021robust},  millimeter-wave communications \cite{lu2020robust, wang2021joint}, and unmanned aerial vehicle communications \cite{wei2020sum, mu2021intelligent}. While these works aimed at exploiting passive IRS for enhancing wireless information transmissions (WIT) of cellular networks, its high passive beamforming gain is also practically appealing for multifarious IoT application scenarios and unlocking its full potential in extending the lifetime of energy-constrained IoT devices \cite{wu2021intelligent}. Regarding the IRS-enabled IoT systems, several works have been emerged on three typical research lines, namely IRS-aided wireless information and power transfer (SWIPT) \cite{8941080, 9257429, 9133435}, IRS-aided wireless powered communication networks (WPCNs) \cite{9214497, 9298890, wu2021irs, 9650755}, and IRS-aided mobile edge computing (MEC) \cite{hu2021reconfigurable, 9270605, chen2021irs}.


Specifically, all of the aforementioned works have considered to exploit passive IRS for assisting wireless communications. Nevertheless, one critical issue of passive IRS-aided wireless systems is that its performance may be practically limited by the well-known high (product-distance) path-loss \cite{9326394}. To address the issue of passive IRS, a new type of IRS, called active IRS, has been recently proposed in \cite{long2021active, zhang2021active, you2021wireless, zeng2021throughput}, by amplifying the incident signals with low-cost hardware. Different from passive IRS, active IRS comprises a number of active elements, which are equipped with low-cost negative resistance components (e.g., negative impedance converter and tunnel diode), thereby enabling to amplify the reflected signals \cite{long2021active}. It is worth noting that the active IRS is quite different from the amplify-and-forward (AF) relay although it is capable of amplifying incident signals. Specifically, for the conventional AF relay, power-consuming RF chains are generally needed to receive signals first and then transmit it with the power amplification. In contrast, the active IRS does not need RF chains and the basic operation mechanism of the active IRS is similar to that of the passive IRS, which directly amplify/reflect signals in the air with low-power reflection-type amplifiers. By leveraging the active IRS in wireless transmissions of cellular networks, the joint AP and IRS beamforming design problems were investigated in \cite{long2021active, zhang2021active} for different system setups, i.e., single user uplink and multi-user downlink communication systems, respectively. In addition, the optimization of the active IRS deployment was studied in \cite{you2021wireless}. Benefited by smartly controlling the amplification gain at each element, the results in \cite{long2021active, zhang2021active, you2021wireless} demonstrated that the active IRS can perform better than the passive IRS in most practical scenarios.

Despite the aforementioned advantages of applying active IRS for assisting cellular networks, the investigations on the employment of active IRS in energy-constrained IoT systems are still in its infancy. Note that massive connectivity is another requirement of IoT systems, which thus calls for efficient multiple access (MA) schemes to support large number of IoT devices. Non-orthogonal multiple access (NOMA) is practically appealing for IoT networks due to its capability to enable the access of massive numbers of devices by allowing multiple users to simultaneously access the same spectrum \cite{8114722}. To the best of authors' knowledge, there is no existing work that investigates the potential performance gain of employment active IRSs in energy-constrained IoT systems by considering different MA schemes. Regarding an active IRS-aided energy-constrained IoT system, several fundamental issues remain unsolved. First, does time division multiple access (TDMA) outperform NOMA in such systems? Since active IRS is able to proactively establish favorable time-varing wireless channels, it is generally believed that exploiting dedicated IRS beamforming patterns for each individual device has a beneficial effect for TDMA. Additionally, the amplification gain of each element at active IRS is highly dependent on the transmit power of devices. For the typical energy-limited scenario, the transmit power of each device when employing NOMA is generally lower than that of TDMA for a given amount of energy, which renders that NOMA may reap larger available amplification gains at the active IRS compared to TDMA. Taking the above factors into consideration, it still remains an open problem which MA scheme is more beneficial for maximizing the system throughput. Second, how to design a more advanced MA scheme that is capable of flexibly striking a balance between performance and signaling overhead? This question is driven by the fact that even if TDMA outperforms NOMA, it also incurs higher signalling overhead since more IRS beamforming patterns are needed. As such, it may not be preferable to rely on the pure TDMA-based scheme considering the performance-cost tradeoff, especially when the number of IoT devices is practically large.

Motivated by the above issues, we study an active IRS-aided energy-constrained IoT system considering different MA schemes, where an active IRS is deployed to assist the UL data transmission between multiple energy-constrained devices and an AP.
Different from the conventional passive IRS, the ability of power amplification for the active IRS provides new degrees of freedom to combat against the severe double path loss and further enhance the received signal power. On the other hand, it also introduces new optimization variables and makes the IRS beamforming vectors and transmit power of each device closely coupled in the newly added IRS amplification power constraints, thus rendering the joint design of the IRS beamforming and resource allocation more challenging than that of the conventional passive IRS. The main contributions of this paper are summarized as follows.

\begin{itemize}
  \item We first study active IRS-aided energy-constrained IoT systems by utilizing both TDMA and NOMA schemes. For the TDMA scheme, the IRS beamforming vectors can be adjusted dynamically across time for each individual device, whereas for the NOMA scheme, all the devices share the same set of IRS beamforming vector during their data transmission. By utilizing the proposed models, we formulate the corresponding system sum throughput maximization problems by jointly optimizing the transmit power of each device, time allocation, and IRS beamforming vectors, subject to the energy constraints of IoT devices and the IRS amplification power constraints.
  \item For the TDMA scheme, we exploit the inherent properties of the optimal solution and propose an efficient algorithm based on successive convex approximation (SCA) to solve its associated optimization problem, where all the variables are optimized simultaneously. For the NOMA scheme, we propose an alternating optimization (AO)-based method to partition the entire variables into two blocks, namely the transmit power of devices and IRS beamforming vectors. Based on semidefinite program (SDP) techniques and Charnes-Cooper transformations, each block of variables is obtained optimally in an iterative way until convergence is achieved.
  \item Regarding the achievable sum throughput of the active IRS aided TDMA and NOMA schemes, we prove that given only one available IRS beamforming vector, the NOMA based scheme generally achieves a larger throughput than the TDMA based scheme, whereas the latter can potentially outperform the former if multiple IRS beamforming vectors are available. To provide more flexibility for balancing the performance-cost tradeoff, we propose a hybrid TDMA-NOMA scheme, where multiple devices are partitioned into several groups and the devices in the same group transmit simultaneously via NOMA while devices in different groups occupy orthogonal time resources. The proposed scheme generalizes the TDMA and NOMA schemes as two special cases and is applicable for any given number of IRS beamforming vectors available. We further extend the AO-based method to solve its associated optimization problem by applying proper change of variables.
  \item Our numerical results validate the theoretical findings and demonstrate that the practical superiorities of the active IRS over the conventional passive IRS in terms of supporting multiple low-energy devices, extending coverage range, and reducing the required number of reflecting elements. Moreover, it is shown that our proposed hybrid TDMA-NOMA scheme is capable of significantly lowering the signaling overhead at the cost of slight performance loss by properly determining the number of devices in each group.
\end{itemize}

The rest of this paper is organized as follows. Section II presents the system model for the active IRS-aided energy-constrained IoT system and problem formulations considering TDMA and NOMA. Sections III introduces proposed efficient algorithms for the corresponding problems in Section II and provides the theoretical performance comparison for the TDMA and NOMA-based schemes. In Section IV, we propose a general hybrid TDMA-NOMA scheme and extend the AO-based algorithm for solving its associated optimization problem. Section V presents numerical results to evaluate the performance of our proposed schemes and draw useful insights. Finally, we conclude the paper in Section VI.

\emph{Notations:} Boldface upper-case and lower-case  letter denote matrix and   vector, respectively.  ${\mathbb C}^ {d_1\times d_2}$ stands for the set of  complex $d_1\times d_2$  matrices. For a complex-valued vector $\bf x$, ${\left\| {\bf x} \right\|}$ represents the  Euclidean norm of $\bf x$, ${\rm arg}({\bf x})$ denotes  the phase of   $\bf x$, and ${\rm diag}(\bf x) $ denotes a diagonal matrix whose main diagonal elements are extracted from vector $\bf x$.
For a vector $\bf x$, ${\bf x}^*$ and  ${\bf x}^H$  stand for  its conjugate and  conjugate transpose respectively.   For a square matrix $\bf X$,  ${\rm{Tr}}\left( {\bf{X}} \right)$, $\left\| {\bf{X}} \right\|_2$ and ${\rm{rank}}\left( {\bf{X}} \right)$ respectively  stand for  its trace, Euclidean norm and rank,  while ${\bf{X}} \succeq {\bf{0}}$ indicates that matrix $\bf X$ is positive semi-definite.
A circularly symmetric complex Gaussian random variable $x$ with mean $ \mu$ and variance  $ \sigma^2$ is denoted by ${x} \sim {\cal CN}\left( {{{\mu }},{{\sigma^2 }}} \right)$.  ${\cal O}\left(  \cdot  \right)$ is the big-O computational complexity notation.
\begin{figure}[!t]
\setlength{\abovecaptionskip}{-5pt}
\setlength{\belowcaptionskip}{-5pt}
\centering
\includegraphics[width= 0.5\textwidth]{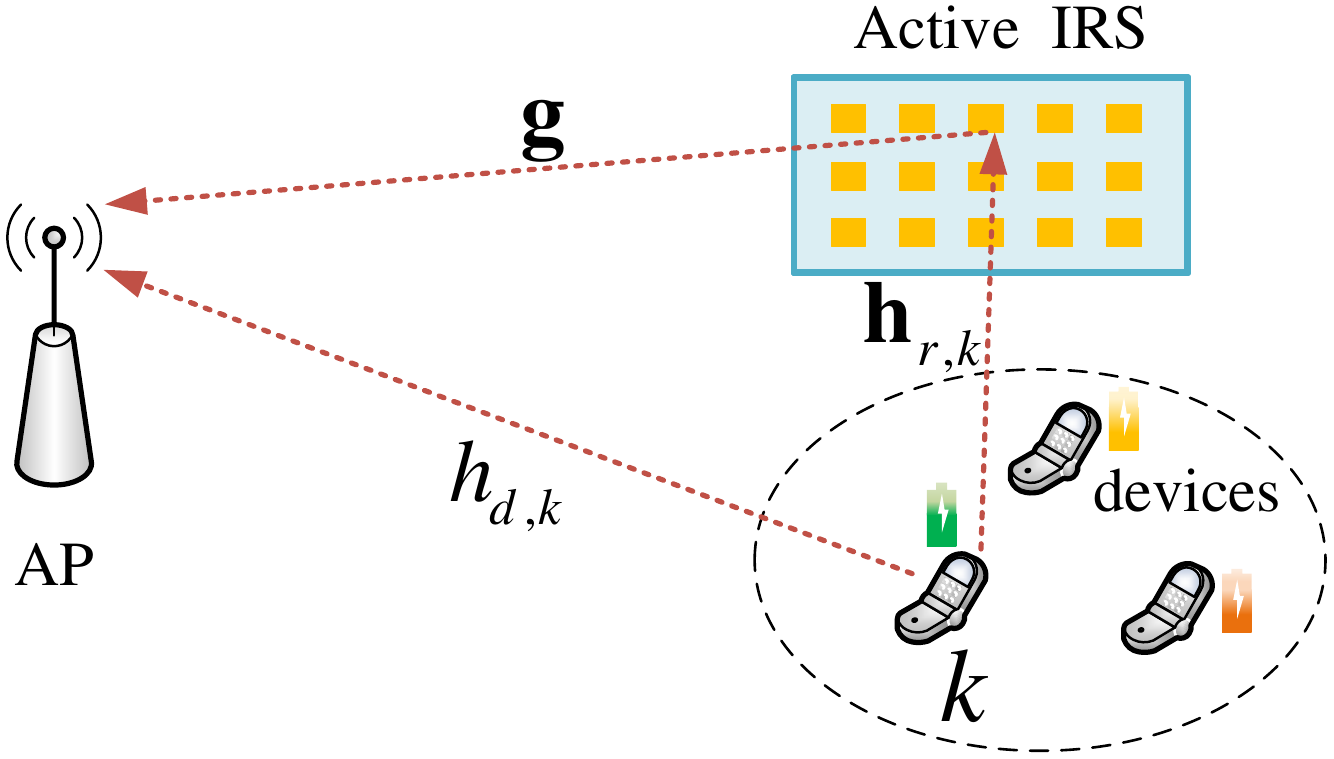}
\DeclareGraphicsExtensions.
\caption{An active IRS-aided energy-constrained uplink communication system.}
\label{model}
\vspace{-12pt}
\end{figure}
\section{System Models and Problem Formulations}

\subsection{System Model}
As shown in Fig. 1, we consider an active IRS-aided energy-constrained uplink communication system, where an active IRS with $N$ elements, denoted by the set, ${\cal N} \buildrel \Delta \over = \left\{ {1, \ldots ,N} \right\}$, is deployed to improve the efficiency of data transmission from $K$ single-antenna IoT devices, denoted by the set ${\cal K} \buildrel \Delta \over = \left\{ {1, \ldots ,K} \right\}$, to a  single-antenna AP. Without loss of generality, we focus on a typical energy-constrained IoT scenario, where a certain amount of energy ${E_k}$ Joule (J) for decice $k$ is available at the beginning of each transmission period. For example, the energy sources for each device can be ambient energy sources (e.g., TV signals or a renewable energy source) or dedicated WPT.
Then, all the devices use their available energy to transmit their own data to the AP in the uplink{\footnote{Note that our considered model generalizes various IoT applications. For example, it can be applicable to an MEC application, where energy-constrained devices offload their computation-intensive tasks to MEC servers integrated at an AP, or a wireless sensor network, where devices upload sensed data to the AP for further processing based on their available limited energy.}}. In addition, all devices and the AP are assumed to operate over the same frequency band and the time duration of each transmission period is denoted by ${T_{\max }}$. Furthermore, let ${\bf{g}} \in \mathbb{C}^{N\times 1}$, ${{\bf{h}}_{r,k}} \in \mathbb{C}^{N \times 1}$, and ${{h}_{d,k}} \in \mathbb{C}$ denote the equivalent baseband channels from the IRS to the AP, from device $k$ to the IRS, and from device $k$ to the AP, respectively. All the wireless channels are assumed to be quasi-static flat-fading and thereby remain constant within each transmission period ${T_{\max }}$. To characterize the fundamental performance comparisons for different MA schemes, we assume that the instantaneous CSI for all links is available based on the compressive sensing technique or the active sensors deployed at the IRS \cite{9326394}.

\subsection{TDMA and NOMA-based Multiple Access}
For a typical transmission period, each device can transmit its own information signal to the AP. Furthermore, we propose two data transmission setups depending on whether TDMA or NOMA is used, as detailed below.

\subsubsection{TDMA-based Scheme}
For the TDMA-based scheme, the AP receives information signals from different devices, which occupy orthogonal time slots (TSs). Let ${\tau _k}$ denote the time duration of the $k$-th TS, which is allocated for device $k$. Thus, we have $\sum\nolimits_{k = 1}^K {{\tau _k} \le {T_{\max }}}$. In the $k$-th TS, a dedicated IRS beamforming pattern, denoted by ${{\bf{\Phi }}_k} = {\rm{diag}}\left( {{\phi _{k,1}}, \ldots ,{\phi _{k,N}}} \right)$, is employed to reflect and amplify the transmitted signals. In particular, the reflecting/amplfication coefficient of the $n$-th element is denoted by ${\phi _{k,n}} = {a_{k,n}}{e^{j{\theta _{k,n}}}},n \in {\cal N}$, where ${a_{k,n}}$ and ${{\theta _{k,n}}}$, ${\theta _{k,n}} \in \left[ {0,2\pi } \right)$, represent the corresponding amplitude and phase. It is worth noting that ${a_{k,n}}$ can be greater than 1 with active loads \cite{long2021active}. We denote the transmit power and the information bearing signal of device $k$ as ${p_{{k}}}$ and ${s_{{k}}}$, respectively, which satisfies ${\mathop{\rm E}\nolimits} \left\{ {{{\left| {{s_{{k}}}} \right|}^2}} \right\} = 1$. In the $k$-th TS, the signal reflected and amplified by the active IRS is given by
\begin{align}\label{reflect_signal}
{{\bf{y}}_{r,k}} = {{\bf{\Phi }}_k}{{\bf{h}}_{r,k}}\sqrt {{p_k}} {s_k} + {{\bf{\Phi }}_k}{{\bf{n}}_r},\;\;k \in {\cal K},
\end{align}
where ${{\bf{n}}_r} \in \mathbb{C}^{N \times 1}$ represents the thermal noise generated at the active IRS, which is distributed as ${\cal C}{\cal N}\left( {0,\sigma _r^2{{\bf{I}}_N}} \right)$. Note that the active IRS amplifies both the incident signal and noise. We denote the maximum amplification power of the active IRS as ${P_r}$ and thus we have
\begin{align}\label{power_constraint}
{\rm{E}}\left[ {{{\left\| {{{\bf{y}}_{r,k}}} \right\|}^2}} \right] = {p_k}{\left\| {{{\bf{\Phi }}_k}{{\bf{h}}_{r,k}}} \right\|^2} + \sigma _r^2\left\| {{{\bf{\Phi }}_k}} \right\|_F^2 \le {P_r},\;\;k \in {\cal K}.
\end{align}
Additionally, the received signal at the AP in the $k$-th TS can be expressed as
\begin{align}\label{received_signal}
{{\bf{y}}_k} = \left( {{h_{d,{k_l}}} + {{\bf{g}}^H}{{\bf{\Phi }}_k}{{\bf{h}}_{r,k}}} \right)\sqrt {{p_k}} {s_k} + {{\bf{g}}^H}{{\bf{\Phi }}_k}{{\bf{n}}_r} + n,\;\;k \in {\cal K},
\end{align}
where ${n} \sim {\cal C}{\cal N}\left( {0,1} \right)$ denotes the additive white Gaussian noise at the AP. As such, the sum achievable throughput of the system in bits/Hz is given by
\begin{align}\label{throughput_TDMA}
{R_{{\rm{TDMA}}}}=\sum\limits_{k = 1}^K {{\tau _k}{{\log }_2}\left( {1 + \frac{{{p_k}{{\left| {{{{h}}_{d,{k}}} + {{\bf{g}}^H}{{\bf{\Phi }}_k}{{\bf{h}}_{r,k}}} \right|}^2}}}{{{\sigma ^2} + \sigma _r^2{{\left\| {{{\bf{g}}^H}{{\bf{\Phi }}_k}} \right\|}^2}}}} \right)}.
\end{align}

\subsubsection{NOMA-based Scheme}
For the NOMA-based scheme, all the devices transmit simultaneously to the AP. As such, a common IRS beamforming pattern, denoted by ${\bf{\Phi }}$, is shared by all devices. In this case, the signal reflected and amplified by the active IRS is given by
\begin{align}\label{reflect_signal_NOMA}
{{\bf{y}}_r} = {\bf{\Phi }}\sum\limits_{k = 1}^K {{{\bf{h}}_{r,k}}\sqrt {{p_k}} {s_k}}  + {\bf{\Phi }}{{\bf{n}}_r}.
\end{align}
Correspondingly, we have the following IRS amplification power constraint for the NOMA scheme
\begin{align}\label{power_constraint_NOMA}
{\rm{E}}\left[ {{{\left\| {{{\bf{y}}_r}} \right\|}^2}} \right] = \sum\limits_{k = 1}^K {{p_k}{{\left\| {{\bf{\Phi }}{{\bf{h}}_{r,k}}} \right\|}^2}}  + \sigma _r^2\left\| {\bf{\Phi }} \right\|_F^2 \le {P_r}.
\end{align}
The received signal at the AP can be further expressed as
\begin{align}\label{received_signal}
{\bf{y}} = \sum\limits_{k = 1}^K {\left( {{h_{d,k}} + {{\bf{g}}^H}{\bf{\Phi }}{{\bf{h}}_{r,k}}} \right)\sqrt {{p_k}} {s_k}}  + {{\bf{g}}^H}{\bf{\Phi }}{{\bf{n}}_r} + n.
\end{align}
To mitigate the multiuser interference, successive interference cancellation is performed at the AP. Taking device $k$ as an example, the AP will first decode the message of device $i$, $\forall i < k$, before decoding the message of device $k$. Then, the message of device $i$, $\forall i < k$, will be subtracted from the composite signal. The message received from device $i$, $\forall i > k$, is treated as noise. Thus, the achievable sum throughput of all devices can be written as \cite{8114722}
\begin{align}\label{throughput_NOMA}
{R_{{\rm{NOMA}}}}=\tau {\log _2}\left( {1 + \sum\nolimits_{k = 1}^K {\frac{{{p_k}{{\left| {{{{h}}_{d,{k}}} + {{\bf{g}}^H}{\bf{\Phi }}{{\bf{h}}_{r,k}}} \right|}^2}}}{{{\sigma ^2} + \sigma _r^2{{\left\| {{{\bf{g}}^H}{\bf{\Phi }}} \right\|}^2}}}} } \right),
\end{align}
where $\tau$ denotes the transmission time duration for all devices and $\tau  \le {T_{\max }}$.

\subsection{Problem Formulation}
We aim to maximize the sum throughput of the considered system by jointly optimizing the time allocation, the transmit power of each device, and the active IRS beamforming. For the TDMA-based scheme, the optimization problem can be expressed as
\begin{subequations}\label{C5}
\begin{align}
\label{C5-a}\mathop {\max }\limits_{\left\{ {{\tau _k}} \right\},\left\{ {{p_k}} \right\},\left\{ {{{\bf{\Phi }}_k}} \right\}}  \;\;&\sum\limits_{k = 1}^K {{\tau _k}{{\log }_2}\left( {1 + \frac{{{p_k}{{\left| {{{{h}}_{d,{k}}} + {{\bf{g}}^H}{{\bf{\Phi }}_k}{{\bf{h}}_{r,k}}} \right|}^2}}}{{{\sigma ^2} + \sigma _r^2{{\left\| {{{\bf{g}}^H}{{\bf{\Phi }}_k}} \right\|}^2}}}} \right)}\\
\label{C5-b}{\rm{s.t.}}\;\;\;\;\;\;\;&{\tau _k}{p_k} \le {E_k}, ~\forall {k} \in {{\cal K}},\\
\label{C5-c}&\sum\limits_{k = 1}^K {{\tau _k} \le {T_{\max }}},\\
\label{C5-d}&{\tau _k} \ge 0, ~{p_{{k}}} \ge 0, ~\forall {k} \in {{\cal K}},\\
\label{C5-e}&{p_k}{\left\| {{{\bf{\Phi }}_k}{{\bf{h}}_{r,k}}} \right\|^2} + \sigma _r^2\left\| {{{\bf{\Phi }}_k}} \right\|_F^2 \le {P_r},~~\forall {k} \in {{\cal K}}.
\end{align}
\end{subequations}
For problem \eqref{C5}, constraint \eqref{C5-b} ensures that the total energy consumed at each device cannot exceed its available energy. Constraints \eqref{C5-c} and \eqref{C5-d} are the total transmission time constraint and the non-negative constraints on the optimization variables, respectively, while constraint \eqref{C5-e} indicates that the amplification power of the active IRS should not exceed the maximum allowed power. Similarly, the optimization problem associated with the NOMA-based scheme can be formulated as
\begin{subequations}\label{C6}
\begin{align}
\label{C6-a}\mathop {\max }\limits_{{\tau },\left\{ {{p_k}} \right\},{{\bf{\Phi }}}}  \;\;&\tau {\log _2}\left( {1 + \sum\limits_{k = 1}^K {\frac{{{p_k}{{\left| {{{{h}}_{d,{k}}} + {{\bf{g}}^H}{\bf{\Phi }}{{\bf{h}}_{r,k}}} \right|}^2}}}{{{\sigma ^2} + \sigma _r^2{{\left\| {{{\bf{g}}^H}{\bf{\Phi }}} \right\|}^2}}}} } \right)\\
\label{C6-b}{\rm{s.t.}}\;\;\;\;&\tau {p_k} \le {E_k}, ~\forall {k} \in {{\cal K}},\\
\label{C6-c}&\tau  \le T,\\
\label{C6-d}&\tau \ge 0, ~{p_{{k}}} \ge 0, ~\forall {k} \in {{\cal K}},\\
\label{C6-e}&\sum\limits_{k = 1}^K {{p_k}{{\left\| {{\bf{\Phi }}{{\bf{h}}_{r,k}}} \right\|}^2}}  + \sigma _r^2\left\| {\bf{\Phi }} \right\|_F^2 \le {P_r}.
\end{align}
\end{subequations}

The above two problems \eqref{C5} and \eqref{C6} are all non-convex since the optimization variables are closely coupled in the objective function and constraints. Therefore, there are no standard methods for solving such non-convex optimization problems optimally in general. Moreover, it is worth noting that the hidden structures of problems \eqref{C5} and \eqref{C6} are fundamentally different. Specifically, in \eqref{C6-a} and \eqref{C6-e}, the common IRS beamforming pattern, i.e., ${\bf{\Phi }}$, is coupled with all ${p_k}$'s, while the transmit power of device $k$, i.e., ${p_k}$, is only coupled with its individual IRS beamforming pattern, i.e., ${{\bf{\Phi }}_k}$, in \eqref{C5-a} and \eqref{C5-e}. These issues have non-trivial effects on the algorithms design and performance comparison for the optimal values of problems \eqref{C5} and \eqref{C6}. In the next section, by deeply exploiting inherent properties of the corresponding optimization problems, we propose efficient algorithms to obtain high-quality solutions for them and provide a fundamental performance comparison for the active IRS aided TDMA and NOMA schemes.

\section{Investigations on Active IRS Aided TDMA and NOMA Schemes}
In this section, we study the active IRS aided TDMA and NOMA schemes. By exploiting the specific structures of the associated problems, two dedicated algorithms are proposed for problems \eqref{C5} and \eqref{C6}, respectively. Furhermore, we provide a theoretical performance comparison for the achievable sum throughput of the active IRS aided TDMA and NOMA schemes.
\subsection{Proposed Algorithm for Active IRS Aided TDMA}
For the TDMA-based scheme, the sum throughput maximization problem can be rewritten in a more tractable form as
\begin{subequations}\label{C7}
\begin{align}
\label{C7-a}\mathop {\max }\limits_{\left\{ {{\tau _k}} \right\},\left\{ {{p_k}} \right\},\left\{ {{{\bf{v}}_k}} \right\}}  \;\;&\sum\limits_{k = 1}^K {{\tau _k}{{\log }_2}\left( {1 + \frac{{{p_k}{{\left| {{h_{d,k}} + {\bf{v}}_k^H{{\bf{q}}_k}} \right|}^2}}}{{{\sigma ^2} + \sigma _r^2{\bf{v}}_k^H{\bf{G}}{{\bf{v}}_k}}}} \right)}\\
\label{C7-b}{\rm{s.t.}}\;\;\;\;\;\;\;\;&\eqref{C5-b}, \eqref{C5-c}, \eqref{C5-d},\\
\label{C7-e}&{p_k}{\bf{v}}_k^H{{\bf{H}}_{r,k}}{{\bf{v}}_k} + \sigma _r^2{\left\| {{{\bf{v}}_k}} \right\|^2} \le {P_r},~~\forall {k} \in {{\cal K}},
\end{align}
\end{subequations}
where ${\bf{v}}_k^H = \left[ {{\phi _{k,1}}, \ldots ,{\phi _{k,N}}} \right]$, ${{\bf{q}}_k} = {\mathop{\rm diag}\nolimits} \left( {{{\bf{g}}^H}} \right){{\bf{h}}_{r,k}}$, ${\bf{G}} = {\mathop{\rm diag}\nolimits} \left( {{{\left| {{{\left[ {\bf{g}} \right]}_1}} \right|}^2}, \ldots {{\left| {{{\left[ {\bf{g}} \right]}_N}} \right|}^2}} \right)$, and ${{\bf{H}}_{r,k}} = {\mathop{\rm diag}\nolimits} \left( {{{\left| {{{\left[ {{{\bf{h}}_{r,k}}} \right]}_1}} \right|}^2}, \ldots {{\left| {{{\left[ {{{\bf{h}}_{r,k}}} \right]}_N}} \right|}^2}} \right)$. Generally, problem \eqref{C7} is a non-convex optimization problem and difficult to solve optimally due to the non-concave objective function as well as coupled variables in constraints \eqref{C5-b} and \eqref{C7-e}. Different from the conventional passive IRS-aided communication systems, the new IRS amplification power constraint, i.e., constraint \eqref{C7-e}, is involved. As a result, the value of the amplification coefficient at each element for the active IRS may be reduced as the transmit power of each device, i.e., ${p_k}$, increases, which may weaken the effectiveness of the active IRS and thus has negative effects on the sum throughput. Therefore, it is not clear whether the energy of each device would be used up for maximizing the sum throughput, i.e., \eqref{C5-b} is active or not at the optimal solution, which motivates the following lemma.
\begin{lem}
At the optimal solution of problem \eqref{C7}, constraint \eqref{C7-e} is always met with equality, i.e., ${\tau _k}{p_k} = {E_k}$.
\end{lem}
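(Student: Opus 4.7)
The plan is to argue by contradiction: assume some optimal solution $(\tau_k^\star,p_k^\star,{\bf v}_k^\star)$ of \eqref{C7} satisfies $\tau_k^\star p_k^\star < E_k$ for some $k$, and construct a feasible perturbation that strictly increases the sum throughput. The key insight is that the IRS amplification constraint \eqref{C7-e} treats $p_k$ and ${\bf v}_k$ asymmetrically --- the signal-amplification term $p_k {\bf v}_k^H {\bf H}_{r,k}{\bf v}_k$ depends only on the product $p_k |v_{k,n}|^2$, whereas the noise-amplification term $\sigma_r^2\|{\bf v}_k\|^2$ depends on ${\bf v}_k$ alone. Exploiting this asymmetry, I can jointly shrink ${\bf v}_k$ and boost $p_k$ so as to relax \eqref{C7-e} while converting the previously unspent energy into stronger direct-path signal and smaller IRS-amplified noise.

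As a preliminary I would note that at any optimum the phases of ${\bf v}_k^\star$ align ${\bf v}_k^{\star H}{\bf q}_k$ with $h_{d,k}$. This follows because ${\bf v}_k^H{\bf G}{\bf v}_k$, ${\bf v}_k^H {\bf H}_{r,k}{\bf v}_k$ and $\|{\bf v}_k\|^2$ all depend only on the magnitudes $|v_{k,n}|$, so only the numerator $|h_{d,k}+{\bf v}_k^H{\bf q}_k|^2$ is sensitive to the phases, and optimal alignment reduces it to $(|h_{d,k}|+|{\bf v}_k^H {\bf q}_k|)^2$.

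I would then pick $\gamma\in\bigl(1,\,E_k/(\tau_k^\star p_k^\star)\bigr]$ and set ${\bf v}_k'={\bf v}_k^\star/\sqrt{\gamma}$, $p_k'=\gamma p_k^\star$, leaving every other variable unchanged. Feasibility is immediate: the new energy usage is $\tau_k^\star p_k'=\gamma \tau_k^\star p_k^\star\le E_k$ by the choice of $\gamma$, while the new left-hand side of \eqref{C7-e} equals $p_k^\star ({\bf v}_k^\star)^H{\bf H}_{r,k}{\bf v}_k^\star+\sigma_r^2\|{\bf v}_k^\star\|^2/\gamma$, which is strictly below the original and hence below $P_r$. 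The $k$-th SNR transforms to
\[
\frac{p_k^\star\bigl(\sqrt{\gamma}\,|h_{d,k}|+|{\bf v}_k^{\star H}{\bf q}_k|\bigr)^2}{\sigma^2+\sigma_r^2\,{\bf v}_k^{\star H}{\bf G}{\bf v}_k^\star/\gamma},
\]
whose numerator strictly exceeds $p_k^\star(|h_{d,k}|+|{\bf v}_k^{\star H}{\bf q}_k|)^2$ and whose denominator is strictly smaller than before; hence the $k$-th rate $R_k$ strictly increases, and since $\tau_k^\star>0$ the overall objective \eqref{C7-a} increases, contradicting optimality.

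The main obstacle I expect lies in a couple of degenerate configurations in which the strict inequalities above collapse, namely (i) $|h_{d,k}|=0$ together with ${\bf v}_k^{\star H}{\bf G}{\bf v}_k^\star=0$, so that device $k$ has essentially no effective channel to the AP through either the direct or the reflected path and the $k$-th term is zero regardless of $p_k$, and (ii) $\tau_k^\star=0$, i.e.\ device $k$ is silent at optimum. Case (i) makes the claim vacuous for device $k$; case (ii) can be ruled out by a short separate exchange argument whenever either \eqref{C5-c} has slack or some active device admits a less favourable marginal rate than device $k$. Apart from these edge cases, the single scaling move above cleanly yields the contradiction.
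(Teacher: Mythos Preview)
Your approach is correct and essentially identical to the paper's: both use the contradiction via the scaling $p_k\mapsto\gamma p_k$, ${\bf v}_k\mapsto{\bf v}_k/\sqrt\gamma$, which keeps $p_k|{\bf v}_k^H{\bf q}_k|^2$ invariant while boosting the direct-link term and shrinking the amplified IRS noise. The paper handles your edge case (ii) with exactly the time-stealing exchange you sketch (giving the silent device a sliver $\Delta\tau$ with ${\bf v}_j={\bf 0}$, i.e.\ direct link only), and your phase-alignment remark in fact supplies a justification the paper leaves implicit when asserting $p_k^\star\,{\rm Re}(h_{d,k}{\bf v}_k^{\star H}{\bf q}_k)<\tilde p_k\,{\rm Re}(h_{d,k}\tilde{\bf v}_k^H{\bf q}_k)$.
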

\begin{proof}
Please refer to Appendix A.
\end{proof}

Lemma 1 reveals that for the active IRS-aided energy-constrained uplink communication system with TDMA, each device will deplete all of its energy at the optimal solution, i.e., constraint \eqref{C7-b} holds with equality. Thus, problem \eqref{C7} is equivalently simplified to the following
\begin{subequations}\label{C8}
\begin{align}
\label{C8-a}\mathop {\max }\limits_{\left\{ {{\tau _k}} \right\},\left\{ {{{\bf{v}}_k}} \right\}}  \;\;&\sum\limits_{k = 1}^K {{\tau _k}{{\log }_2}\left( {1 + \frac{{{E_k}{{\left| {{h_{d,k}} + {\bf{v}}_k^H{{\bf{q}}_k}} \right|}^2}}}{{{\tau _k}\left( {{\sigma ^2} + \sigma _r^2{\bf{v}}_k^H{\bf{G}}{{\bf{v}}_k}} \right)}}} \right)}\\
\label{C8-b}{\rm{s.t.}}\;\;\;\;\;&{\tau _k} \ge 0, ~\forall {k} \in {{\cal K}},\\\
\label{C8-d}&\frac{{{E_k}}}{{{\tau _k}}}{\bf{v}}_k^H{{\bf{H}}_{r,k}}{{\bf{v}}_k} + \sigma _r^2{\left\| {{{\bf{v}}_k}} \right\|^2} \le {P_r},~~\forall {k} \in {{\cal K}}.\\
\label{C8-e}&\eqref{C5-c}.
\end{align}
\end{subequations}
Note that by exploiting Lemma 1, constraint \eqref{C7-e} is transformed into \eqref{C8-d}, which is a convex constraint since ${E_k}$ is a constant. The remaining challenge for solving problem \eqref{C8} is the non-concave objective function. Nevertheless, we propose an efficient algorithm to solve it, where all the variables are optimized simultaneously. First, we introduce a set of slack variables denoted by $\left\{ {{S_k}} \right\}$ and reformulate problem \eqref{C8} as
\begin{subequations}\label{C9}
\begin{align}
\label{C9-a}\mathop {\max }\limits_{\left\{ {{\tau _k}} \right\},\left\{ {{{\bf{v}}_k}} \right\},\left\{ {{S_k}} \right\}}  \;\;&\sum\limits_{k = 1}^K {{\tau _k}{{\log }_2}\left( {1 + \frac{{{S_k}}}{{{\tau _k}}}} \right)}\\
\label{C9-b}{\rm{s.t.}}\;\;\;\;\;\;\;\;&{S_k} \le \frac{{{E_k}{{\left| {{h_{d,k}} + {\bf{v}}_k^H{{\bf{q}}_k}} \right|}^2}}}{{{\sigma ^2} + \sigma _r^2{\bf{v}}_k^H{\bf{G}}{{\bf{v}}_k}}}, ~\forall {k} \in {{\cal K}},\\
\label{C9-d}&\eqref{C5-c}, \eqref{C8-b}, \eqref{C8-d}.
\end{align}
\end{subequations}
For the optimal solution of problem \eqref{C9}, constraint \eqref{C9-b} is met with equality strictly since we can always increase the objective value by increasing ${S_k}$ until constraint \eqref{C9-b} becomes active. Thus, problem \eqref{C9} is equivalent to problem \eqref{C8}. However, constraint \eqref{C9-b} is still non-convex. To deal with constraint \eqref{C9-b}, we rewrite it into a more tractable form as
\begin{align}\label{constraint_form}
{\sigma ^2} + \sigma _r^2{\bf{v}}_k^H{\bf{G}}{{\bf{v}}_k} \le \frac{{{E_k}{{\left| {{h_{d,k}} + {\bf{v}}_k^H{{\bf{q}}_k}} \right|}^2}}}{{{S_k}}}, \forall {k} \in {{\cal K}}.
\end{align}
It is observed that ${\sigma ^2} + \sigma _r^2{\bf{v}}_k^H{\bf{G}}{{\bf{v}}_k}$ is a convex quadratic function of ${{\bf{v}}_k}$ while the right-hand-side of \eqref{constraint_form} is jointly convex with respect to ${{\bf{v}}_k}$ and ${{S_k}}$. Recall that any convex function is globally lower-bounded by its first-order Taylor expansion at any feasible point, which motivates us to employ the SCA technique to deal with the non-convexity of \eqref{constraint_form}. Specifically, for the given local point $\left\{ {{\bf{v}}_k^{\left( l \right)},S_k^{\left( l \right)}} \right\}$ in the $l$-th iteration, we have the following lower bound as
\begin{align}\label{SCA_bound1}
\frac{{{E_k}{{\left| {{h_{d,k}} + {\bf{v}}_k^H{{\bf{q}}_k}} \right|}^2}}}{{{S_k}}} &\ge \frac{{2{E_k}}}{{S_k^{\left( l \right)}}}{\mathop{\rm Re}\nolimits} \left( {{{\left( {{h_{d,k}} + {\bf{v}}_k^H{{\bf{q}}_k}} \right)}^H}\left( {{h_{d,k}} + {{\left( {{\bf{v}}_k^{\left( l \right)}} \right)}^H}{{\bf{q}}_k}} \right)} \right)\nonumber\\
&~~~~ - \frac{{{E_k}{{\left| {{h_{d,k}} + {{\left( {{\bf{v}}_k^{\left( l \right)}} \right)}^H}{{\bf{q}}_k}} \right|}^2}}}{{{{\left( {S_k^{\left( l \right)}} \right)}^2}}}{S_k}\buildrel \Delta \over = f_k^{lb}\left( {{{\bf{v}}_k},{S_k}} \right).
\end{align}
It can be readily checked that $f_k^{lb}\left( {{{\bf{v}}_k},{S_k}} \right)$ is linear and convex with respect to $\left\{ {{{\bf{v}}_k},{S_k}} \right\}$. As such, with the lower bound in \eqref{SCA_bound1}, constraint \eqref{C9-b} is transformed to
\begin{align}\label{approximate_constraint}
{\sigma ^2} + \sigma _r^2{\bf{v}}_k^H{\bf{G}}{{\bf{v}}_k} \le f_k^{lb}\left( {{{\bf{v}}_k},{S_k}} \right), \forall {k} \in {{\cal K}}.
\end{align}
Then, problem \eqref{C9} can be approximated by
\begin{subequations}\label{C10}
\begin{align}
\label{C10-a}\mathop {\max }\limits_{\left\{ {{\tau _k}} \right\},\left\{ {{{\bf{v}}_k}} \right\},\left\{ {{S_k}} \right\}}  \;\;&\sum\limits_{k = 1}^K {{\tau _k}{{\log }_2}\left( {1 + \frac{{{S_k}}}{{{\tau _k}}}} \right)}\\
\label{C0-b}{\rm{s.t.}}\;\;\;\;\;\;\;\;&\eqref{C5-c}, \eqref{C8-d}, \eqref{C8-e}, \eqref{approximate_constraint},
\end{align}
\end{subequations}
which is a convex optimization problem. Thus, we can apply existing standard convex optimization tools to successively solve it optimally until the convergence is achieved. After convergence, it can be guaranteed that a locally optimal solution for original problem \eqref{C7} can be obtained. The computational complexity of this algorithm lies in solving problem \eqref{C10} and is given by ${\cal O}\left( {{{\left( {K\left( {N + 2} \right)} \right)}^{3.5}}{I_{{\rm{Iter}}}}} \right)$, where ${{I_{{\rm{Iter}}}}}$ represents the number of iterations required for convergence.

\subsection{Proposed Algorithm for NOMA}
For the active IRS-aided NOMA scheme, the sum throughput maximization problem can be rewritten as
\begin{subequations}\label{C11}
\begin{align}
\label{C11-a}\mathop {\max }\limits_{{\tau},\left\{ {{p_k}} \right\},{{\bf{v}}}}  \;\;&\tau {\log _2}\left( {1 + \frac{{\sum\nolimits_{k = 1}^K {{p_k}{{\left| {{h_{d,k}} + {{\bf{v}}^H}{{\bf{q}}_k}} \right|}^2}} }}{{{\sigma ^2} + \sigma _r^2{{\bf{v}}^H}{\bf{Gv}}}}} \right)\\
\label{C11-b}{\rm{s.t.}}\;\;\;\;&\eqref{C6-b}, \eqref{C6-c}, \eqref{C6-d}\\
\label{C11-e}&\sum\limits_{k = 1}^K {{p_k}{{\bf{v}}^H}{{\bf{H}}_{r,k}}{\bf{v}}}  + \sigma _r^2{\left\| {\bf{v}} \right\|^2} \le {P_r},
\end{align}
\end{subequations}
where ${{\bf{v}}^H} = \left[ {{\phi _1}, \ldots ,{\phi _N}} \right]$. Different from problem \eqref{C7}, all devices share the same IRS beamforming vector in problem \eqref{C11} and the common IRS beamforming vector ${\bf{v}}$ is coupled with the transmit power of all devices in \eqref{C11-a} and \eqref{C11-e}. Thus, ${\bf{v}}$ cannot be flexibly adjusted for each individual device and the amplification gains of the active IRS may be locked if all devices transmit at their maximum allowed power simultaneously. Furthermore, different from the TDMA case, some devices may not use up all of their energy at the optimal solution for the NOMA case.
As such, the algorithm proposed for problem \eqref{C7} is not applicable to the more challenging problem \eqref{C11}, which thus calls for the new algorithm design. Problem \eqref{C11} is generally intractable due to the non-concave objective function and non-convex constraints \eqref{C6-b} and \eqref{C11-e}. To tackle the coupled variables ${p_k}$ and $\tau$ in constraint \eqref{C6-b}, we have the following lemma.
\begin{lem}
At the optimal solution of problem \eqref{C11}, constraint \eqref{C6-c} is strictly met with equality, i.e., $\tau  = {T_{\max }}$.
\end{lem}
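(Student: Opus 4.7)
The plan is to argue by contradiction. Suppose there exists an optimal solution $(\tau^*,\{p_k^*\},{\bf v}^*)$ with $\tau^* < T_{\max}$, where ${\bf v}^*$ corresponds to ${\bf \Phi}^*$ and the SNR term inside the logarithm is $\gamma^*\triangleq \frac{\sum_{k=1}^K p_k^*|h_{d,k}+({\bf v}^*)^H {\bf q}_k|^2}{\sigma^2+\sigma_r^2 ({\bf v}^*)^H {\bf G}{\bf v}^*}$. The idea is to construct a time-stretched feasible point that strictly increases the objective, thereby reaching the desired contradiction.

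Concretely, I would keep the IRS beamformer unchanged, set $\tau'=T_{\max}$, and rescale the transmit powers as $p_k' = \frac{\tau^*}{\tau'}p_k^*$ for every $k\in{\cal K}$. Feasibility is then easy to verify: constraint \eqref{C6-b} holds because $\tau' p_k' = \tau^* p_k^* \le E_k$; constraint \eqref{C6-c} is met with equality by construction; the non-negativity constraints \eqref{C6-d} clearly hold; and the IRS amplification constraint \eqref{C11-e} remains satisfied since each $p_k'\le p_k^*$ and the second term $\sigma_r^2\|{\bf v}^*\|^2$ is unaffected by the rescaling. Thus $(\tau',\{p_k'\},{\bf v}^*)$ is feasible for problem \eqref{C11}.

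Next I compare the objective values. With the new powers the effective SNR becomes $\gamma' = \frac{\tau^*}{\tau'}\gamma^*$, so the new objective is $\tau'\log_2\!\bigl(1+\tfrac{\tau^*}{\tau'}\gamma^*\bigr)$. Letting $c = \tau^*/\tau' \in (0,1)$, the improvement to be shown reduces to
\begin{align}
\tau'\log_2(1+c\gamma^*) \;>\; \tau^* \log_2(1+\gamma^*) \;=\; c\tau'\log_2(1+\gamma^*),
\end{align}
i.e., $\log_2(1+c\gamma^*) > c\log_2(1+\gamma^*)$. This follows immediately from the concavity of $x\mapsto \log_2(1+x)$ together with $\log_2(1)=0$: writing $c\gamma^* = c\cdot\gamma^* + (1-c)\cdot 0$ and applying Jensen's inequality yields $\log_2(1+c\gamma^*)\ge c\log_2(1+\gamma^*)$, with strict inequality whenever $\gamma^*>0$.

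It therefore only remains to rule out the degenerate case $\gamma^*=0$. This cannot occur at optimality because setting, e.g., $p_k = E_k/T_{\max}$ for some $k$ with $E_k>0$ together with $\tau=T_{\max}$ and any ${\bf v}$ that satisfies \eqref{C11-e} (which exists, e.g., by scaling down ${\bf v}^*$) produces a strictly positive objective value. Hence $\gamma^*>0$, the strict inequality above applies, and the constructed solution attains a strictly larger objective than the assumed optimum, contradicting the optimality of $\tau^*<T_{\max}$. I expect the only subtle point to be verifying that the IRS amplification constraint remains valid under the rescaling and confirming the strict concavity argument; both are routine once the right scaling $p_k'=(\tau^*/\tau')p_k^*$ is identified.
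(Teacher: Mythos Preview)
Your proposal is correct and follows essentially the same approach as the paper: both argue by contradiction, keep the IRS beamformer fixed, rescale powers as $p_k'=(\tau^*/\tau')p_k^*$ so that the per-device energies $\tau p_k$ are preserved, verify feasibility (in particular the amplification constraint via $p_k'\le p_k^*$), and then show the objective strictly increases. The only cosmetic difference is that the paper states the monotonicity of $\tau\mapsto \tau\log_2(1+C/\tau)$ directly, whereas you derive the same inequality via concavity/Jensen and additionally treat the degenerate case $\gamma^*=0$.
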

\begin{proof}
We firstly define ${e_k} = {p_k}\tau$ and then the objective function of problem \eqref{C11} can be rewritten as
\begin{align}\label{obj_function}
\tau {\log _2}\left( {1 + \frac{{\sum\nolimits_{k = 1}^K {{e_k}{{\left| {{h_{d,k}} + {{\bf{v}}^H}{{\bf{q}}_k}} \right|}^2}} }}{{\tau \left( {{\sigma ^2} + \sigma _r^2{{\bf{v}}^H}{\bf{Gv}}} \right)}}} \right).
\end{align}
Suppose ${\tau ^*}$ is the optimal transmission time. Then, we show that ${\tau ^*} = {T_{\max }}$ by contradiction as follows. Assume that ${\Xi ^*} = \left\{ {{\tau ^*},p_k^*,{{\bf{v}}^*}} \right\}$ achieves the optimal solution of problem \eqref{C11} and ${\tau ^*} < {T_{\max }}$. Then, we construct a different solution $\tilde \Xi  = \left\{ {\tilde \tau ,{{\tilde p}_k},{\bf{\tilde v}}} \right\}$, where ${\tau ^*} < \tilde \tau  \le {T_{\max }}$, $e_k^* = \tilde \tau {{\tilde p}_k} = {\tau ^*}p_k^*$, and ${\bf{\tilde v}} = {{\bf{v}}^*}$. It can be readily verified that $\left\{ {{{\tilde p}_k},{\bf{\tilde v}}} \right\}$ satisfies constraint \eqref{C11-e} since ${{\tilde p}_k} \le p_k^*$. As such, the constructed solution satisfies all the constraints in problem \eqref{C11}. Since \eqref{obj_function} is an increasing function with respect to $\tau$ and $\tilde \tau  > {\tau ^*}$, we have
\begin{align}\label{obj_function_compare}
\tilde \tau {\log _2}\left( {1 + \frac{{\sum\nolimits_{k = 1}^K {e_k^*{{\left| {{h_{d,k}} + {{{\bf{\tilde v}}}^H}{{\bf{q}}_k}} \right|}^2}} }}{{\tilde \tau \left( {{\sigma ^2} + \sigma _r^2{{{\bf{\tilde v}}}^H}{\bf{G\tilde v}}} \right)}}} \right) > {\tau ^*}{\log _2}\left( {1 + \frac{{\sum\nolimits_{k = 1}^K {e_k^*{{\left| {{h_{d,k}} + {{\bf{v}}^*}^H{{\bf{q}}_k}} \right|}^2}} }}{{{\tau ^*}\left( {{\sigma ^2} + \sigma _r^2{{\bf{v}}^*}^H{\bf{G}}{{\bf{v}}^*}} \right)}}} \right).
\end{align}
\eqref{obj_function_compare} indicates that the constructed solution ${\tilde \Xi }$ achieves a higher objective value, which contradicts that the optimal ${\tau ^*} < {T_{\max }}$. This thus completes the proof.
\end{proof}

Exploiting Lemma 2, problem \eqref{C11} can be simplified to
\begin{subequations}\label{C12}
\begin{align}
\label{C12-a}\mathop {\max }\limits_{\left\{ {{p_k}} \right\},{{\bf{v}}}}  \;\;&{T_{\max }}{\log _2}\left( {1 + \frac{{\sum\nolimits_{k = 1}^K {{p_k}{{\left| {{h_{d,k}} + {{\bf{v}}^H}{{\bf{q}}_k}} \right|}^2}} }}{{{\sigma ^2} + \sigma _r^2{{\bf{v}}^H}{\bf{Gv}}}}} \right)\\
\label{C12-b}{\rm{s.t.}}\;\;\;&{p_k} \le \frac{{{E_k}}}{{{T_{\max }}}}, ~\forall {k} \in {{\cal K}},\\
\label{C12-c}&{p_{{k}}} \ge 0, ~\forall {k} \in {{\cal K}},\\
\label{C12-d}&\eqref{C11-e}.
\end{align}
\end{subequations}

The main challenges for solving problem \eqref{C12} are the non-concave objective function \eqref{C12-a} and the coupled variables $\left\{ {{p_k},{\bf{v}}} \right\}$ involved in constraint \eqref{C12-d}, which motivates us to apply the AO-based method to solve it. Specifically, we divide all the variables into two blocks, i.e., 1) IRS beamforming vector ${\bf{v}}$, and 2) power control ${{p_k}}$, and then each block of variables is optimized in an iterative way, until convergence is achieved.
\subsubsection{IRS Beamforming Optimization}
For any given transmit power ${{p_k}}$, the IRS beamforming vector optimization problem is given by
\begin{subequations}\label{C14}
\begin{align}
\label{C14-a}\mathop {\max }\limits_{{{\bf{v}}}}  \;\;&{T_{\max }}{\log _2}\left( {1 + \frac{{\sum\nolimits_{k = 1}^K {{p_k}{{\left| {{h_{d,k}} + {{\bf{v}}^H}{{\bf{q}}_k}} \right|}^2}} }}{{{\sigma ^2} + \sigma _r^2{{\bf{v}}^H}{\bf{Gv}}}}} \right)\\
\label{C14-b}{\rm{s.t.}}\;\;&\eqref{C12-d}.
\end{align}
\end{subequations}
Let ${\left| {{h_{d,k}} + {{\bf{v}}^H}{{\bf{q}}_k}} \right|^2} = {\left| {{{{\bf{\bar v}}}^H}{{{\bf{\bar q}}}_k}} \right|^2}$, where ${{{\bf{\bar v}}}^H} = \left[ {{{\bf{v}}^H},1} \right]$ and ${{{\bf{\bar q}}}_k} = {\left[ {{\bf{q}}_k^H,h_{d,k}^H} \right]^H}$. Define ${{{\bf{\bar Q}}}_k} = {{{\bf{\bar q}}}_k}{\bf{\bar q}}_k^H$, ${\bf{\bar V}} = {\bf{\bar v}}{{{\bf{\bar v}}}^H}$, which needs to satisfy ${\bf{\bar V}} \succeq {\bf{0}}$, ${\rm{rank}}\left( {{\bf{\bar V}}} \right) = 1$ and ${\left[ {{\bf{\bar V}}} \right]_{N + 1,N + 1}} = 1$. We thus have
\begin{align}\label{intended_signal}
\sum\limits_{k = 1}^K {{p_k}{{\left| {{h_{d,k}} + {{\bf{v}}^H}{{\bf{q}}_k}} \right|}^2} = } \sum\limits_{k = 1}^K {{p_k}{\mathop{\rm Tr}\nolimits} } \left( {{{{\bf{\bar Q}}}_k}{\bf{\bar V}}} \right) = {\rm{Tr}}\left( {{\bf{\bar Q\bar V}}} \right),
\end{align}
where ${\bf{\bar Q}} = \sum\nolimits_{k = 1}^K {{p_k}} {{{\bf{\bar Q}}}_k}$. Define ${\bf{\bar G}} = {\mathop{\rm diag}\nolimits} \left( {{\bf{G}},0} \right)$, ${{{\bf{\bar H}}}_{r,k}} = {\mathop{\rm diag}\nolimits} \left( {{{\bf{H}}_{r,k}},0} \right)$, and ${{{\bf{\bar H}}}_r} = \sum\limits_{k = 1}^K {{p_k}{{{\bf{\bar H}}}_{r,k}}}  + \sigma _r^2{{\bf{\Pi }}_r}$, where ${{\bf{\Pi }}_r} = {\mathop{\rm diag}\nolimits} \left( {1, \ldots 1,0} \right)$. We further have
\begin{align}\label{noise}
{\sigma ^2} + \sigma _r^2{{\bf{v}}^H}{\bf{Gv}} = {\sigma ^2} + \sigma _r^2{\rm{Tr}}\left( {{\bf{\bar G\bar V}}} \right),\\
{p_k}{{\bf{v}}^H}{{\bf{H}}_{r,k}}{\bf{v}} + \sigma _r^2{\left\| {\bf{v}} \right\|^2} = {\rm{Tr}}\left( {{{{\bf{\bar H}}}_r}{\bf{\bar V}}} \right).
\end{align}
As such, we can reformulate problem \eqref{C14} in an equivalent form as follows
\begin{subequations}\label{C15}
\begin{align}
\label{C15-a}\mathop {\max }\limits_{{{{\bf{\bar V}}}}}  \;\;&\frac{{{\rm{Tr}}\left( {{\bf{\bar Q\bar V}}} \right)}}{{{\sigma ^2} + \sigma _r^2{\rm{Tr}}\left( {{\bf{\bar G\bar V}}} \right)}}\\
\label{C15-b}{\rm{s.t.}}\;\;&{\rm{Tr}}\left( {{{{\bf{\bar H}}}_r}{\bf{\bar V}}} \right) \le {P_r}, \\
\label{C15-c}&{\bf{\bar V}} \succeq {\bf{0}},\\
\label{C15-d}&{\left[ {{\bf{\bar V}}} \right]_{N + 1,N + 1}} = 1,\\
\label{C15-e}&{\rm{rank}}\left( {{\bf{\bar V}}} \right) = 1.
\end{align}
\end{subequations}
Problem \eqref{C15} is still intractable due to the fraction form in the objective function and the rank-one constraint. To tackle this issue, we relax the rank-one constraint and apply the Charnes-Cooper transformation to reformulate it as a linear form as
\begin{subequations}\label{C16}
\begin{align}
\label{C16-a}\mathop {\max }\limits_{{{{\bf{\bar V}}}}, t}  \;\;&{{\rm{Tr}}\left( {{\bf{\bar Q\bar V}}} \right)}\\
\label{C16-b}{\rm{s.t.}}\;\;&\sigma _r^2{\rm{Tr}}\left( {{\bf{\bar G\bar V}}} \right) + t{\sigma ^2} = 1, \\
\label{C16-c}&{\rm{Tr}}\left( {{{{\bf{\bar H}}}_r}{\bf{\bar V}}} \right) \le t{P_r},\\
\label{C16-d}&t > 0,\\
\label{C16-e}&{\left[ {{\bf{\bar V}}} \right]_{N + 1,N + 1}} = t,\\
\label{C16-f}&\eqref{C15-c}.
\end{align}
\end{subequations}
\begin{lem}
By relaxing the rank-one constraint (37e) in problem \eqref{C15}, problem \eqref{C15} is equivalent to problem \eqref{C16}.
\end{lem}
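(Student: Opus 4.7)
The plan is to establish the equivalence via the standard Charnes-Cooper change of variables applied to a linear-fractional SDP. Concretely, starting from the rank-one-relaxed version of problem \eqref{C15}, I would introduce the substitution $t = 1/\!\left({\sigma^2 + \sigma_r^2 {\rm Tr}({\bf \bar G \bar V})}\right)$ and ${\bf \bar W} = t{\bf \bar V}$, and then exhibit a one-to-one correspondence between feasible points of the two problems that preserves the objective value. Since the denominator ${\sigma^2 + \sigma_r^2 {\rm Tr}({\bf \bar G \bar V})}$ is strictly positive for any feasible ${\bf \bar V} \succeq {\bf 0}$ (because $\sigma^2 > 0$), the variable $t$ is well-defined and strictly positive, which is exactly what constraint \eqref{C16-d} requires.

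The forward direction is to take any ${\bf \bar V}$ feasible for the rank-one-relaxed \eqref{C15} and verify that $({\bf \bar W}, t)$ with ${\bf \bar W} = t{\bf \bar V}$ is feasible for \eqref{C16} with the same objective: linearity of the trace immediately gives $\sigma_r^2 {\rm Tr}({\bf \bar G \bar W}) + t\sigma^2 = t(\sigma^2 + \sigma_r^2 {\rm Tr}({\bf \bar G \bar V})) = 1$, so \eqref{C16-b} holds; the IRS power constraint \eqref{C16-c} follows by multiplying \eqref{C15-b} by $t > 0$; positive semidefiniteness \eqref{C15-c} is preserved since $t > 0$; and \eqref{C16-e} follows from $[{\bf \bar V}]_{N+1,N+1}=1$. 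The objective ${\rm Tr}({\bf \bar Q \bar W}) = t\,{\rm Tr}({\bf \bar Q \bar V})$ equals the original fractional objective in \eqref{C15-a} by construction of $t$.

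The reverse direction is to take any $({\bf \bar W}, t)$ feasible for \eqref{C16}, note that $t>0$ by \eqref{C16-d}, and recover ${\bf \bar V} = {\bf \bar W}/t$. Here \eqref{C16-e} is used critically: it forces $[{\bf \bar V}]_{N+1,N+1} = [{\bf \bar W}]_{N+1,N+1}/t = 1$, so \eqref{C15-d} holds. Dividing \eqref{C16-c} by $t$ recovers \eqref{C15-b}, and the equality constraint \eqref{C16-b} divided by $t$ yields $\sigma_r^2 {\rm Tr}({\bf \bar G \bar V}) + \sigma^2 = 1/t$, which rearranges to $t = 1/(\sigma^2 + \sigma_r^2 {\rm Tr}({\bf \bar G \bar V}))$. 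Substituting back shows the fractional objective of \eqref{C15} at ${\bf \bar V}$ equals ${\rm Tr}({\bf \bar Q \bar W})$, so the two problems have identical optimal values and their optimizers are related by this bijection.

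The routine obstacle is simply ensuring strict positivity of $t$ and of the denominator everywhere in the feasible set, so that the transformation and its inverse are both well-posed; this is immediate from $\sigma^2>0$ on one side and from constraint \eqref{C16-d} on the other. No relaxation gap is introduced beyond the rank-one dropping that has already been performed, because Charnes-Cooper is an exact reformulation of linear-fractional programs over convex (here, semidefinite) feasible sets.
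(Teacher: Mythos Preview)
Your proposal is correct and follows essentially the same approach as the paper: both directions of the Charnes-Cooper bijection are exhibited, mapping ${\bf \bar V}\mapsto ({\bf \bar V}/d,\,1/d)$ with $d=\sigma^2+\sigma_r^2{\rm Tr}({\bf \bar G\bar V})$ forward and $({\bf \bar W},t)\mapsto {\bf \bar W}/t$ backward, and objective values are shown to coincide. Your write-up is somewhat more explicit in verifying each constraint, but the argument is the same.
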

\begin{proof}
First, given any feasible solution $\left\{ {{\bf{\bar V}}} \right\}$ to problem \eqref{C15}, it can be verified that with the solution $\left\{ {{{{\bf{\bar V}}} \mathord{\left/
 {\vphantom {{{\bf{\bar V}}} {\left( {{\sigma ^2} + \sigma _r^2{\rm{Tr}}\left( {{\bf{\bar G\bar V}}} \right)} \right),{1 \mathord{\left/
 {\vphantom {1 {\left( {{\sigma ^2} + \sigma _r^2{\rm{Tr}}\left( {{\bf{\bar G\bar V}}} \right)} \right)}}} \right.
 \kern-\nulldelimiterspace} {\left( {{\sigma ^2} + \sigma _r^2{\rm{Tr}}\left( {{\bf{\bar G\bar V}}} \right)} \right)}}}}} \right.
 \kern-\nulldelimiterspace} {\left( {{\sigma ^2} + \sigma _r^2{\rm{Tr}}\left( {{\bf{\bar G\bar V}}} \right)} \right),{1 \mathord{\left/
 {\vphantom {1 {\left( {{\sigma ^2} + \sigma _r^2{\rm{Tr}}\left( {{\bf{\bar G\bar V}}} \right)} \right)}}} \right.
 \kern-\nulldelimiterspace} {\left( {{\sigma ^2} + \sigma _r^2{\rm{Tr}}\left( {{\bf{\bar G\bar V}}} \right)} \right)}}}}} \right\}$, problem \eqref{C16} achieves the same objective value as that of \eqref{C15}. Then, given any feasible solution $\left\{ {{\bf{\bar V}},t} \right\}$ to problem \eqref{C16}, it can be similarly demonstrated that with the solution $\left\{ {{{{\bf{\bar V}}} \mathord{\left/
 {\vphantom {{{\bf{\bar V}}} t}} \right.
 \kern-\nulldelimiterspace} t}} \right\}$, problem \eqref{C15} achieves the same objective value. As such, problem \eqref{C15} and \eqref{C16} have the same optimal value. Lemma 3 is thus proved.
\end{proof}

It can be readily verified that problem \eqref{C16} is a convex optimization problem, whose optimal solution can be efficiently solved by the standard convex optimization techniques. According to Lemma 3, we can solve problem \eqref{C16} instead of solving \eqref{C15}.
\bibliographystyle{IEEEtran}
\begin{rem}
According to Theorem 3.2 in \cite{huang2009rank}, we can conclude that there always exists a optimal solution ${{{\bf{\bar V}}}^*}$ to problem \eqref{C16}, satisfies the following constraint:
\begin{align}\label{noise}
{\rm{Ran}}{{\rm{k}}^2}\left( {{{{\bf{\bar V}}}^*}} \right) + {\rm{Ran}}{{\rm{k}}^2}\left( {{t^*}} \right) \le 3.
\end{align}
We note that ${\rm{Ran}}{{\rm{k}}^2}\left( {{t^*}} \right) = 1$. Thus, we have ${\rm{Rank}}\left( {{{{\bf{\bar V}}}^*}} \right) = 1$. There always exists rank-one solution for problem \eqref{C16}.
\end{rem}

Based on Remark 1, we can obtain the optimal rank-one solution for problem \eqref{C16}, denoted by $\left\{ {{{{\bf{\bar V}}}^*},{t^*}} \right\}$. Thus, the optimal solution for problem \eqref{C15} is obtained as ${{{{{\bf{\bar V}}}^*}} \mathord{\left/
 {\vphantom {{{{{\bf{\bar V}}}^*}} {{t^*}}}} \right.
 \kern-\nulldelimiterspace} {{t^*}}}$. By performing singular value decomposition (SVD) for ${{{{{\bf{\bar V}}}^*}} \mathord{\left/
 {\vphantom {{{{{\bf{\bar V}}}^*}} {{t^*}}}} \right.
 \kern-\nulldelimiterspace} {{t^*}}}$, the optimal solution ${{\bf{v}}^*}$ can be found for original problem \eqref{C14}.
\subsubsection{Power Control Optimization}
For any given IRS beamforming vector ${\bf{v}}$, the power control optimization problem is given by
\begin{subequations}\label{C17}
\begin{align}
\label{C17-a}\mathop {\max }\limits_{\left\{ {{p_k}} \right\}}  \;\;&{\frac{{\sum\nolimits_{k = 1}^K {{p_k}{{\left| {{h_{d,k}} + {{\bf{v}}^H}{{\bf{q}}_k}} \right|}^2}} }}{{{\sigma ^2} + \sigma _r^2{{\bf{v}}^H}{\bf{Gv}}}}}\\
\label{C17-b}{\rm{s.t.}}\;\;&\eqref{C12-b}, \eqref{C12-c}, \eqref{C12-d}.
\end{align}
\end{subequations}
Since the objective function and all constraints in \eqref{C17} are linear, problem \eqref{C17} is a convex optimization problem, which can be optimally solved by the standard convex optimization methods.
\subsubsection{Overall Algorithm and Computational Complexity Analysis}
Based on the solutions to the above two subproblems, an efficient AO algorithm is proposed, where the IRS beamforming vector and power control are alternately optimized until convergence is achieved. Note that the objective value of problem \eqref{C12} is non-decreasing by alternately optimizing $\left\{ {\bf{v}} \right\}$ and $\left\{ {{p_k}} \right\}$, thus the proposed AO algorithm is guaranteed to converge. The mainly computational complexity of this AO algorithm lies from solving problems \eqref{C16} and \eqref{C17}. Specifically, the computational complexity for solving problems \eqref{C16} and \eqref{C17} is given by ${\cal O}\left( {{{\left( {N + 1} \right)}^{3.5}}} \right)$ and ${\cal O}\left( {{K^{3.5}}} \right)$, respectively. Therefore, the total complexity of the AO algorithm is ${\cal O}\left( {\left( {{{\left( {N + 1} \right)}^{3.5}} + {K^{3.5}}} \right){I_{{\rm{AO}}}}} \right)$, where ${{I_{{\rm{AO}}}}}$ denotes the number of iterations required to reach convergence.

\subsection{Active IRS Aided TDMA Versus NOMA}
Compared to TDMA, NOMA is expected to achieve a higher throughput in a conventional communication system by allowing multiple devices simultaneously to access the same spectrum. In our considered active IRS aided energy-constrained IoT systems, each device can occupy its dedicated IRS beamforming under the TDMA-based scheme, while all the devices share the same IRS beamforming vector under the NOMA-based scheme. As such, it is not clear which MA scheme can achieve higher throughput. In this subsection, we give some discussions about the performance comparison for the active IRS aided TDMA and NOMA schemes.

Firstly, we introduce a special case of the active IRS aided TDMA where only one beamforming vector is available for assisting uplink transmission, which leads to the following problem formulation
\begin{subequations}\label{C18}
\begin{align}
\label{C18-a}\mathop {\max }\limits_{\left\{ {{\tau _k}} \right\},\left\{ {{p_k}} \right\},{\bf{v}}}  \;\;&\sum\limits_{k = 1}^K {{\tau _k}{{\log }_2}\left( {1 + \frac{{{p_k}{{\left| {{h_{d,k}} + {{\bf{v}}^H}{{\bf{q}}_k}} \right|}^2}}}{{{\sigma ^2} + \sigma _r^2{{\bf{v}}^H}{\bf{Gv}}}}} \right)}\\
\label{C18-b}{\rm{s.t.}}\;\;\;\;\;\;&\eqref{C5-b}, \eqref{C5-c}, \eqref{C5-d},\\
\label{C18-c}&{p_k}{\bf{v}}^H{{\bf{H}}_{r,k}}{{\bf{v}}} + \sigma _r^2{\left\| {{{\bf{v}}}} \right\|^2} \le {P_r},~~\forall {k} \in {{\cal K}}.
\end{align}
\end{subequations}
Intuitively, it seems that constraint \eqref{C11-e} is tighter than \eqref{C18-c} in terms of the transmit power. Thus, it may be expected that the optimal value of problem \eqref{C18} is no smaller than that of \eqref{C11}. However, the result is counterintuitive as shown in the following proposition.
\begin{proposition}
Denote the optimal values of problem \eqref{C11} and \eqref{C18} by $R_{{\rm{NOMA}}}^*$ and $R_{{\rm{TDMA}}}^{\left( {lb} \right)*}$, respectively, it follows that
\begin{align}\label{TDMA and NOMA}
R_{{\rm{NOMA}}}^* \ge R_{{\rm{TDMA}}}^{\left( {lb} \right)*}.
\end{align}

If
\begin{align}\label{condition3}
{{\tilde p}_k}{\left| {{h_{d,k}} + {{{\bf{\tilde v}}}^H}{{\bf{q}}_k}} \right|^2} \ne {{\tilde p}_j}{\left| {{h_{d,j}} + {{{\bf{\tilde v}}}^H}{{\bf{q}}_j}} \right|^2},\exists k,j \in {\cal K},
\end{align}
$R_{{\rm{NOMA}}}^* > R_{{\rm{TDMA}}}^{\left( {{\rm{lb}}} \right)*}$ holds, where $\left\{ {{{\tilde p}_k},{\bf{\tilde v}}} \right\}$ is the optimal solution of problem \eqref{C18}.

Moreover, a sufficient condition for $R_{{\rm{NOMA}}}^* = R_{{\rm{TDMA}}}^{\left( {lb} \right)*}$ is
\begin{align}\label{sufficient_condition}
{{\mathord{\buildrel{\lower3pt\hbox{$\scriptscriptstyle\smile$}}
\over p} }_k}{{{\bf{\mathord{\buildrel{\lower3pt\hbox{$\scriptscriptstyle\smile$}}
\over v} }}}^H}{{\bf{H}}_{r,k}}{\bf{\mathord{\buildrel{\lower3pt\hbox{$\scriptscriptstyle\smile$}}
\over v} }} + \sigma _r^2{\left\| {{\bf{\mathord{\buildrel{\lower3pt\hbox{$\scriptscriptstyle\smile$}}
\over v} }}} \right\|^2} \le {P_r},\forall k \in {\cal K},
\end{align}
where $\left\{ {{{\mathord{\buildrel{\lower3pt\hbox{$\scriptscriptstyle\smile$}}
\over p} }_k},{\bf{\mathord{\buildrel{\lower3pt\hbox{$\scriptscriptstyle\smile$}}
\over v} }}} \right\}$ is the optimal solution of problem \eqref{C18} when relaxing constraint \eqref{C18-c}.
\end{proposition}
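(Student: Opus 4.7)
The plan is to prove all three parts by constructing a NOMA-feasible solution from the single-beamforming TDMA optimum and comparing the two rates via Jensen's inequality on $\log_2(1+\cdot)$. Let $\{\tilde\tau_k,\tilde p_k,\tilde{\mathbf v}\}$ be optimal for \eqref{C18}; arguing as in Lemmas 1 and 2 (energy depletion and full time usage), one first shows $\tilde\tau_k\tilde p_k=E_k$ and $\sum_k\tilde\tau_k=T_{\max}$ at optimality. Define the candidate NOMA solution $\mathbf v^N=\tilde{\mathbf v}$, $\tau^N=T_{\max}$, $p_k^N=\tilde\tau_k\tilde p_k/T_{\max}=E_k/T_{\max}$. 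The energy constraints hold with equality, $\tau^N p_k^N=E_k$, and the aggregated budget \eqref{C11-e} follows by taking a convex combination of the per-slot TDMA constraints \eqref{C18-c} with weights $\tilde\tau_k/T_{\max}\in[0,1]$ that sum to $1$, yielding $\sum_k p_k^{N}\tilde{\mathbf v}^H{\mathbf H}_{r,k}\tilde{\mathbf v}+\sigma_r^2\|\tilde{\mathbf v}\|^2\le(P_r-\sigma_r^2\|\tilde{\mathbf v}\|^2)+\sigma_r^2\|\tilde{\mathbf v}\|^2=P_r$.

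Setting $\gamma_k=\tilde p_k|h_{d,k}+\tilde{\mathbf v}^H{\mathbf q}_k|^2/(\sigma^2+\sigma_r^2\tilde{\mathbf v}^H{\mathbf G}\tilde{\mathbf v})$ and $\lambda_k=\tilde\tau_k/T_{\max}$ (so $\sum_k\lambda_k=1$), the TDMA optimum reads $R_{\rm TDMA}^{(lb)*}=T_{\max}\sum_k\lambda_k\log_2(1+\gamma_k)$, while the NOMA rate of the constructed point equals $T_{\max}\log_2\bigl(1+\sum_k\lambda_k\gamma_k\bigr)$. Jensen's inequality for the concave map $\log_2(1+\cdot)$ immediately gives \eqref{TDMA and NOMA}, with strict inequality whenever the $\gamma_k$'s with positive $\lambda_k$ are not all equal. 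Since the denominator of $\gamma_k$ is independent of $k$, non-equality of the $\gamma_k$'s coincides with condition \eqref{condition3} on the received signal powers $\tilde p_k|h_{d,k}+\tilde{\mathbf v}^H{\mathbf q}_k|^2$, thereby yielding the strict case.

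For the sufficient condition in Part 3, let $\{\breve\tau_k,\breve p_k,\breve{\mathbf v}\}$ solve \eqref{C18} with \eqref{C18-c} dropped. The key observation is that the relaxed TDMA and relaxed NOMA problems (dropping \eqref{C11-e}) share the same value at every fixed $\mathbf v$. Indeed, substituting $p_k=E_k/\tau_k$ and letting $\alpha_k=|h_{d,k}+\mathbf v^H\mathbf q_k|^2/(\sigma^2+\sigma_r^2\mathbf v^H\mathbf G\mathbf v)$, the KKT stationarity for any $\tau_k>0$ in the inner maximization over $\tau_k\ge 0$ with $\sum_k\tau_k\le T_{\max}$ reads $f(\gamma_k):=\log_2(1+\gamma_k)-\gamma_k/((1+\gamma_k)\ln 2)=\mu$ with a common multiplier $\mu$, where $\gamma_k=E_k\alpha_k/\tau_k$. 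Since $f'(\gamma)=\gamma/((1+\gamma)^2\ln 2)>0$, all active $\gamma_k$ must coincide, and substituting this common value collapses the TDMA objective to $T_{\max}\log_2\bigl(1+T_{\max}^{-1}\sum_k E_k\alpha_k\bigr)$, which is precisely the relaxed NOMA value at $\mathbf v$ under $p_k=E_k/T_{\max}$. Taking the maximum over $\mathbf v$ yields $R_{\rm TDMA}^{(R)*}=R_{\rm NOMA}^{(R)*}$. The hypothesis \eqref{sufficient_condition} says the relaxed TDMA optimum is feasible for the original \eqref{C18}, so the relaxation is tight and $R_{\rm TDMA}^{(R)*}=R_{\rm TDMA}^{(lb)*}$. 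Chaining $R_{\rm NOMA}^*\le R_{\rm NOMA}^{(R)*}=R_{\rm TDMA}^{(R)*}=R_{\rm TDMA}^{(lb)*}$ with Part 1 gives the equality.

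The main obstacle is the KKT analysis that identifies the two relaxed problems: one must argue that the single Lagrange multiplier for the total-time constraint forces equal per-device SNRs in the TDMA time-sharing, invoke strict monotonicity of $f$, and handle devices with $\breve\tau_k=0$, which are harmless because they contribute zero on both sides and can be pruned from the comparison. The remaining ingredients—feasibility via the convex combination above, Jensen's inequality for the concave log, and the adaptation of Lemmas 1 and 2 to the single-beamforming TDMA problem \eqref{C18}—are routine.
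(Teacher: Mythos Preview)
Your proposal is correct and follows essentially the same approach as the paper: construct a NOMA-feasible point from the single-beamforming TDMA optimum by aggregating the per-slot IRS power constraints via the convex combination with weights $\tilde\tau_k/T_{\max}$, compare rates through the concavity of $\log_2(1+\cdot)$ (the paper phrases this as an AM--GM/log-sum inequality on the perspective function $\tau\log_2(1+c/\tau)$), and for the equality case use the KKT stationarity of the relaxed TDMA time-allocation problem to force equal per-device SNRs, thereby identifying the relaxed TDMA and relaxed NOMA optima. The only organizational difference is that the paper first establishes $R_{\rm TDMA}^{\rm ub}=R_{\rm NOMA}^{\rm ub}$ via a two-sided argument before treating the constrained problems, while you invoke the relaxed problems only for Part~3; both routes rely on the same ingredients.
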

\begin{proof}
Please refer to Appendix B.
\end{proof}

Proposition 1 implies that the active IRS aided NOMA scheme achieves a no smaller sum throughput than its counterpart of the TDMA scheme when only one beamforming vector is available. This seems contradictory to the conclusion of the previous work \cite{chen2021irs} in the passive IRS aided energy-constrained scenario. The reason is that for an energy-constrained system, the transmit power of each device under the NOMA case is generally lower than that of the TDMA case. This makes NOMA be able to reap larger amplification gains of the active IRS compared to TDMA, which compensates the performance loss induced by the low transmit power. Based on Proposition 1, we give a sufficient condition for that the active IRS aided TDMA scheme can achieve a higher throughput than NOMA in the following theorem.
\begin{thm}
Denote the optimal value of problem \eqref{C7} by $R_{{\rm{TDMA}}}^*$. Then, \eqref{sufficient_condition} serves as a sufficient condition for $R_{{\rm{TDMA}}}^* \ge R_{{\rm{NOMA}}}^*$.
\end{thm}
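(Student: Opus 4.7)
The plan is to chain together two simple observations: one is structural (a restriction/relaxation argument between problems \eqref{C7} and \eqref{C18}), and the other is an already-established consequence of Proposition 1 under the stated condition. Concretely, I would argue that $R_{\text{TDMA}}^\ast \ge R_{\text{TDMA}}^{(\mathrm{lb})\ast}$ purely for structural reasons, and then invoke the equality $R_{\text{NOMA}}^\ast = R_{\text{TDMA}}^{(\mathrm{lb})\ast}$ that Proposition 1 delivers when \eqref{sufficient_condition} holds.

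First I would formalize the restriction argument. Problem \eqref{C18} is precisely problem \eqref{C7} with the additional constraint $\mathbf{\Phi}_1 = \mathbf{\Phi}_2 = \cdots = \mathbf{\Phi}_K$, i.e., a single common beamforming vector $\mathbf{v}$ shared across all TDMA time slots instead of a dedicated $\mathbf{v}_k$ per device. Any feasible point of \eqref{C18}, with $\mathbf{v}_k \equiv \mathbf{v}$ for all $k$, immediately yields a feasible point of \eqref{C7} achieving the same objective value. Taking the supremum over the larger feasible set of \eqref{C7} gives $R_{\text{TDMA}}^\ast \ge R_{\text{TDMA}}^{(\mathrm{lb})\ast}$.

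Next I would invoke Proposition 1 under the hypothesis \eqref{sufficient_condition}. The second part of that proposition states that whenever the relaxed single-beamforming TDMA optimum $\{\mathord{\buildrel{\lower3pt\hbox{$\scriptscriptstyle\smile$}}\over p}_k, \mathord{\buildrel{\lower3pt\hbox{$\scriptscriptstyle\smile$}}\over{\mathbf v}}\}$ already satisfies each IRS amplification power constraint, i.e., \eqref{sufficient_condition} holds, one has $R_{\text{NOMA}}^\ast = R_{\text{TDMA}}^{(\mathrm{lb})\ast}$. Combining with the inequality from the previous step yields
\begin{equation*}
R_{\text{TDMA}}^\ast \;\ge\; R_{\text{TDMA}}^{(\mathrm{lb})\ast} \;=\; R_{\text{NOMA}}^\ast,
\end{equation*}
which is exactly the claim of the theorem.

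There is no real obstacle here beyond checking that the identification of \eqref{C18} as a restriction of \eqref{C7} is legitimate in every respect: in particular, one should verify that under $\mathbf{\Phi}_k \equiv \mathbf{\Phi}$, constraint \eqref{C18-c} (one inequality per $k$, evaluated at the common $\mathbf{v}$) implies constraint \eqref{C5-e} (one inequality per $k$, evaluated at $\mathbf{v}_k$), which is immediate. The only nontrivial ingredient—namely the equality $R_{\text{NOMA}}^\ast = R_{\text{TDMA}}^{(\mathrm{lb})\ast}$—is already supplied by Proposition 1, so the proof reduces to a two-line chain and I would keep it deliberately short rather than re-proving Proposition 1.
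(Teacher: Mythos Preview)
Your proposal is correct and essentially identical to the paper's own proof: the paper also observes that problem \eqref{C18} is a special case of problem \eqref{C7} with ${\bf v}_k={\bf v}_j$ for all $k,j$, hence $R_{\rm TDMA}^{*}\ge R_{\rm TDMA}^{(\rm lb)*}$, and then invokes Proposition 1 under \eqref{sufficient_condition} to get $R_{\rm TDMA}^{(\rm lb)*}=R_{\rm NOMA}^{*}$.
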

\begin{proof}
According to the results in Proposition 1, we have $R_{{\rm{TDMA}}}^{\left( {{\rm{lb}}} \right)*} = R_{{\rm{NOMA}}}^*$ when \eqref{sufficient_condition} is satisfied. Note that problem \eqref{C18} is a special case of problem \eqref{C7} with ${{\bf{v}}_k} = {{\bf{v}}_j},\forall k,j \in {\cal K}$. As such, the optimal solution of problem \eqref{C18} is also one feasible solution of problem \eqref{C7}, which yields $R_{{\rm{TDMA}}}^* \ge R_{{\rm{TDMA}}}^{\left( {{\rm{lb}}} \right)*} = R_{{\rm{NOMA}}}^*$. This thus completes the proof.
\end{proof}
\begin{rem}
Theorem 1 implies that the active IRS aided TDMA scheme generally outperforms its counterpart with NOMA due to more IRS beamforming vectors can be exploited for TDMA. The above two MA schemes strike a balance between the throughput performance and the number of optimization variables as well as the feedback signalling overhead. Specifically, the TDMA-based scheme requires the AP to optimize and feedback $KN$ IRS reflection coefficients (including both amplitudes and phase shifts) to the IRS, which increases linearly with the number of devices, while these required for the NOMA-based scheme is $N$. Therefore, the signaling overhead for TDMA in such an overloaded scenario is much heavier than that of NOMA. This motivates us to propose a more flexible scheme to fully harness the maximum performance gain of the active IRS with controllable signaling overhead and complexity, elaborated in the next section.
\end{rem}

\section{An Overhead-Aware Hybrid TDMA-NOMA Scheme}
Motivated by the discussions in the previous section, we develop a more general hybrid TDMA-NOMA-based uplink transmission scheme in this section for fully exploiting the tradeoff between the throughput performance and signalling overhead.
\vspace{-8pt}
\subsection{Problem Formulation for General Hybrid TDMA-NOMA Scheme}
\begin{figure}[!t]
\setlength{\abovecaptionskip}{-5pt}
\setlength{\belowcaptionskip}{-5pt}
\centering
\includegraphics[width= 0.75\textwidth]{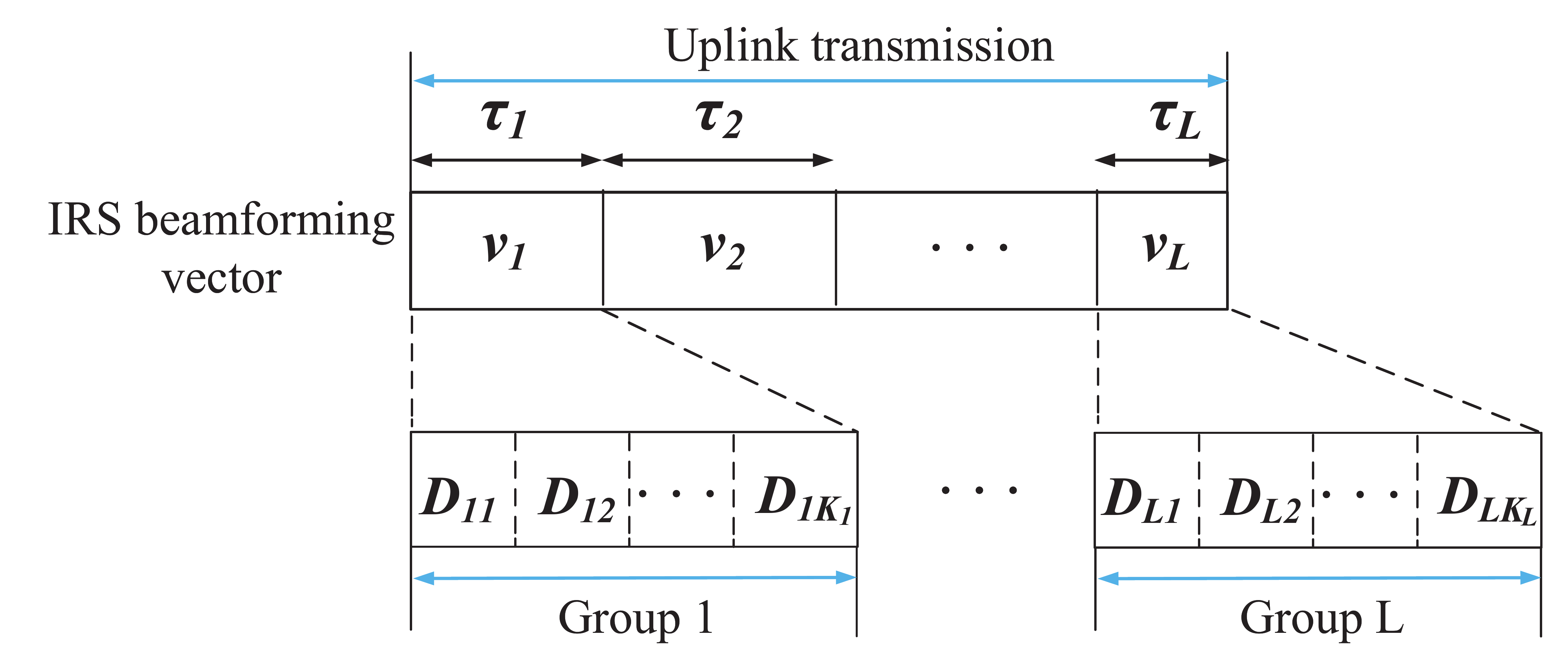}
\DeclareGraphicsExtensions.
\caption{The proposed general TDMA-NOMA scheme for active IRS-aided uplink communications.}
\label{model2}
\vspace{-20pt}
\end{figure}
The detailed transmission protocol is illustrated in Fig. 2. Without loss of generality, we assume that the IRS beamforming vectors can be reconfigured $L-1$ times during the transmission period, corresponding to $L$ available IRS beamforming vectors, i.e., ${{\bf{v}}_l}, l \in {\cal L} = \left\{ {1, \ldots L} \right\}$. We further partition $K$ devices into $L$ disjoint groups equally and the set of devices in $l$-th group is denoted by ${{\cal K}_l}$ with size ${K_l} = {K \mathord{\left/
 {\vphantom {K L}} \right.
 \kern-\nulldelimiterspace} L}$. Furthermore, the devices in different groups transmit in different TSs, while the devices in the same group transmit simultaneously employing NOMA. Specifically, each device ${k_l}$, ${k_l} \in {{\cal K}_l}$, will transmit in the $l$-th TS with the aid of IRS beamforming vector ${{\bf{v}}_l}$ and the time duration for the $l$-th TS is denoted by ${\tau _l}$. Under the given device sets\footnote{Note that in practice, the device grouping may be affected by many factors. Nevertheless, in Section V, we evaluate the impact of different device grouping  methods on the performance in our considered systems.} ${{\cal K}_l},\forall l \in {\cal L}$, the system sum throughput maximization problem by jointly optimizing the IRS beamforming vectors, the transmit power, and the time allocation, can be formulated as
\begin{subequations}\label{C19}
\begin{align}
\label{C19-a}\mathop {\max }\limits_{\left\{ {{\tau _l}} \right\},{\left\{ p \right\}_{{k_l}}},\left\{ {{{\bf{v}}_l}} \right\}}  \;\;&\sum\limits_{l = 1}^L {{\tau _l}{{\log }_2}\left( {1 + \sum\nolimits_{{k_l} \in {{\cal K}_l}} {\frac{{{p_{{k_l}}}{{\left| {{h_{d,{k_l}}} + {\bf{v}}_l^H{{\bf{q}}_{{k_l}}}} \right|}^2}}}{{{\sigma ^2} + {\bf{v}}_l^H{\bf{G}}{{\bf{v}}_l}}}} } \right)}\\
\label{C19-b}{\rm{s.t.}}\;\;\;\;\;\;\;\;&{\tau _l}{p_{{k_l}}} \le {E_{{k_l}}}, ~\forall l \in {\cal L}, \forall {k_l} \in {{\cal K}_l},\\
\label{C19-c}&\sum\limits_{l = 1}^L {{\tau _l} \le {T_{\max }}},\\
\label{C19-d}&{\tau _l} \ge 0, ~{p_{{k_l}}} \ge 0, ~\forall l \in {\cal L}, ~\forall {k_l} \in {{\cal K}_l},\\
\label{C19-e}&\sum\nolimits_{{k_l} \in {{\cal K}_l}} {{p_{{k_l}}}{\bf{v}}_l^H{{\bf{H}}_{r,{k_l}}}{{\bf{v}}_l}}  + \sigma _r^2{\left\| {{{\bf{v}}_l}} \right\|^2} \le {P_r},~\forall {k_l} \in {{\cal K}_l}.
\end{align}
\end{subequations}

It is worth noting that problem \eqref{C19} is equivalent to \eqref{C5} when $L = K$. In addition, when $L = 1$, it can be verified that problem \eqref{C19} is equivalent to \eqref{C6}. As such, both standalone TDMA and NOMA-based schemes are special cases of the proposed hybrid TDMA-NOMA-based scheme. By controlling the number groups, i.e., $L$, the proposed hybrid MA scheme can be applicable for any given number of IRS beamforming vectors imposed by the practical systems and operate in an overhead-aware manner.
\subsection{Proposed Solution for Problem \eqref{C19}}
Note that the non-convex problem \eqref{C19} is more challenging to solve than problems \eqref{C7} and \eqref{C11}. Specifically, different from problem \eqref{C7} that each device can occupy a dedicated IRS beamforming vector or problem \eqref{C11} that all devices share the same IRS beamforming vector. As such, Lemma 1 and Lemma 2 can not be applied here to simplify the problem. To handle this issue, we extend the AO-based framework to solve such a problem by partitioning the entire optimization variables into two blocks, but with the different elements in each block, namely, $\left\{ {{\tau _l},{p_{{k_l}}}} \right\}$ and $\left\{ {{{\bf{v}}_l}} \right\}$, for updating alternatingly. Next, we respectively solve the above two blocks.
\subsubsection{Optimizing $\left\{ {{\tau _l},{p_{{k_l}}}} \right\}$}
For the given block $\left\{ {{{\bf{v}}_l}} \right\}$, the subproblem for optimizing $\left\{ {{\tau _l},{p_{{k_l}}}} \right\}$ is still a non-convex optimization problem due to the coupled variables ${\tau _l}$ and ${p_{{k_l}}}$ in constraint \eqref{C19-b} and the non-concave objective function. We apply a change of variables as ${e_{{k_l}}} = {\tau _l}{p_{{k_l}}}$ and rewrite the subproblem in an equivalent form as follows
\begin{subequations}\label{C22}
\begin{align}
\label{C22-a}\mathop {\max }\limits_{\left\{ {{\tau _l}} \right\},\left\{ {{e_{{k_l}}}} \right\}}  \;\;&\sum\limits_{l = 1}^L {{\tau _l}{{\log }_2}\left( {1 + \sum\nolimits_{{k_l} \in {K_l}} {\frac{{{e_{{k_l}}}{{\left| {{h_{d,{k_l}}} + {\bf{v}}_l^H{{\bf{q}}_{{k_l}}}} \right|}^2}}}{{{\tau _l}\left( {{\sigma ^2} + {\bf{v}}_l^H{\bf{G}}{{\bf{v}}_l}} \right)}}} } \right)}\\
\label{C22-b}{\rm{s.t.}}\;\;\;\;\;&{e_{{k_l}}} \le {E_{{k_l}}}, ~\forall l \in {\cal L}, \forall {k_l} \in {{\cal K}_l},\\
\label{C22-c}&{\tau _l} \ge 0, ~{e_{{k_l}}} \ge 0, ~\forall l \in {\cal L}, ~\forall {k_l} \in {{\cal K}_l},\\
\label{C22-d}&\sum\nolimits_{{k_l} \in {{\cal K}_l}} {{e_{{k_l}}}{\bf{v}}_l^H{{\bf{H}}_{r,{k_l}}}{{\bf{v}}_l}}  + \sigma _r^2{\tau _l}{\left\| {{{\bf{v}}_l}} \right\|^2} \le {P_r}{\tau _l},~\forall l \in {\cal L},\\
\label{C22-e}&\eqref{C19-c}.
\end{align}
\end{subequations}
It can be readily checked that objective function \eqref{C22-a} is concave and all constraints in problem \eqref{C22} are convex. Therefore,  problem \eqref{C22} is convex and its optimal solution can be efficiently obtained by using standard convex optimization solvers such as CVX.

\subsubsection{Optimizing $\left\{ {{{\bf{v}}_l}} \right\}$}
For the given $\left\{ {{\tau _l},{p_{{k_l}}}} \right\}$, it is observed that optimization variables with respect to different groups are separable in both the objective function and constraints. Thus, the resultant problem with respect to $\left\{ {{{\bf{v}}_l}} \right\}$ can be addressed by solving $L$ independent subproblems in parallel, each with only one single constraint. Specifically, for group $l$, the corresponding subproblem with respect to ${{{\bf{v}}_l}}$'s, $\forall l \in {\cal L}$, is reduced to
\begin{subequations}\label{C23}
\begin{align}
\label{C21-a}\mathop {\max }\limits_{{{\bf{v}}_l}}  \;\;&{\sum\limits_{{k_l} \in {{\cal K}_l}} {\frac{{{p_{{k_l}}}{{\left| {{h_{d,{k_l}}} + {\bf{v}}_l^H{{\bf{q}}_{{k_l}}}} \right|}^2}}}{{{\sigma ^2} + {\bf{v}}_l^H{\bf{G}}{{\bf{v}}_l}}}} }\\
\label{C21-b}{\rm{s.t.}}\;\;&\sum\nolimits_{{k_l} \in {{\cal K}_l}} {{p_{{k_l}}}{\bf{v}}_l^H{{\bf{H}}_{r,{k_l}}}{{\bf{v}}_l}}  + \sigma _r^2{\left\| {{{\bf{v}}_l}} \right\|^2} \le {P_r}.
\end{align}
\end{subequations}
Since problem \eqref{C23} has the same form as that of problem \eqref{C14}, it can be similarly solved optimally by applying the Charnes-Cooper transformation-based SDP technique and the details are thus omitted for brevity.

Similar to the discussions in Section III-B, the computational complexity of the AO algorithm for solving problem \eqref{C19} is given by ${\cal O}\left( {\left( {{L^{3.5}}{{\left( {N + 1} \right)}^{3.5}} + {{\left( {2L} \right)}^{3.5}}} \right){I_{{\rm{AO}}}}} \right)$.

\section{Numerical Results}
In this section, we provide numerical results to validate the effectiveness of the proposed schemes and to draw useful insights into active IRS-aided energy-constrained IoT systems. The AP and the active IRS are located at $\left( {0,0,0} \right)$ meter (m) and $\left( {{x_{{\rm{IRS}}}},0,4} \right)$ m, respectively, and the devices are uniformly and randomly distributed within a radius of 5 m centered at $\left( {{x_{\rm{D}}},0,4} \right)$ m. The path-loss exponents of both the IRS-device and AP-IRS links are set to 2.2, while those of the AP-device links are set to 3.4. In addition, we assume that the AP-device link, the AP-IRS link, and the IRS-device link follow Rayleigh fading. The signal attenuation at a reference distance of 1 m is set as 30 dB. The other parameters are set as follows: $N = 50$, $T = 0.1$ s, ${E_k} = {E_m},\forall m \ne k$, ${x_{{\rm{IRS}}}} = 0$, ${x_{\rm{D}}} = 30$ m, and ${\sigma ^2} = \sigma _r^2 = -75$ dBm.
\subsection{Performance Comparison for TDMA and NOMA-based Schemes}
\begin{figure}[t!]
\centering
\includegraphics[width=3in]{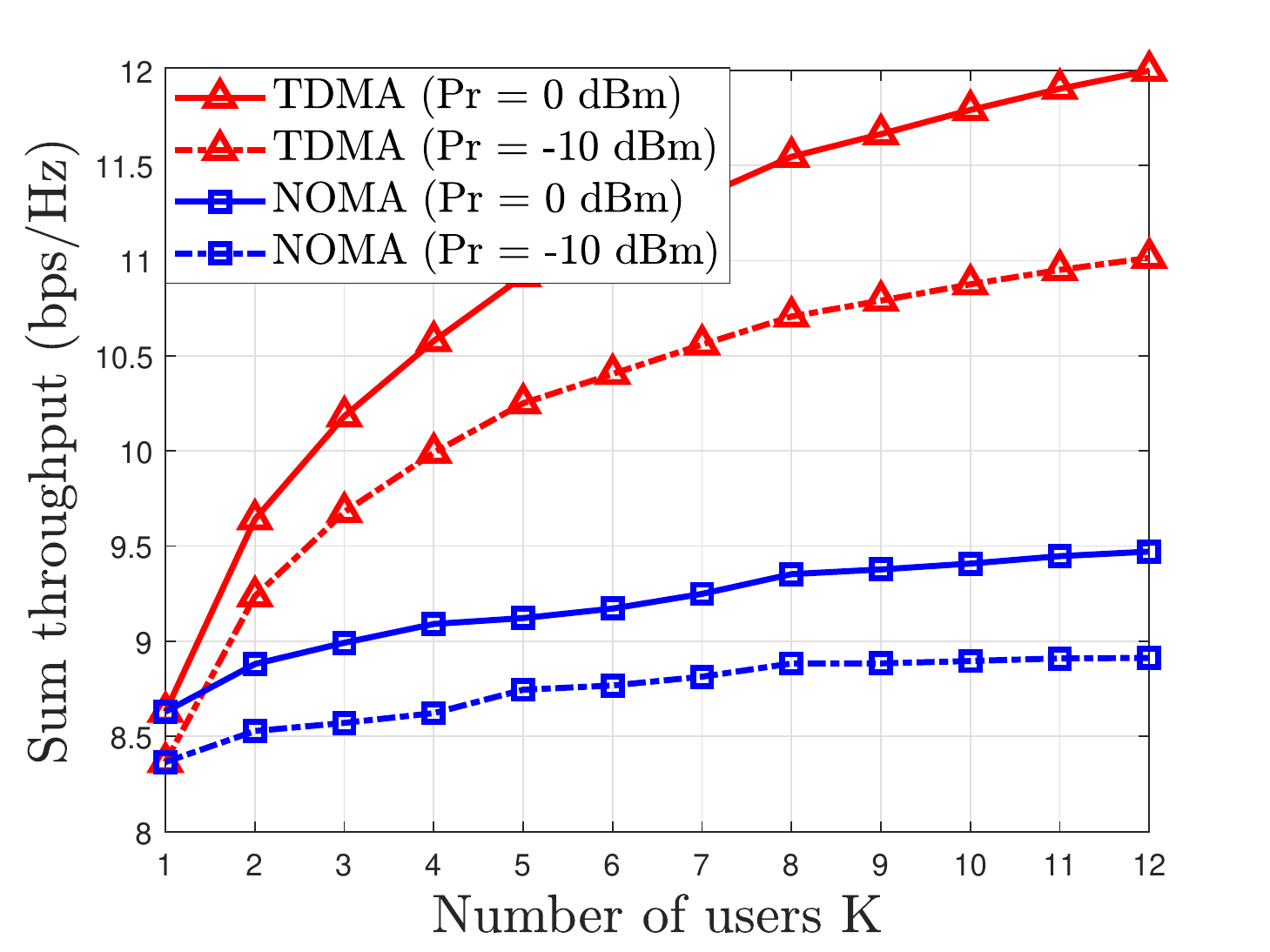}
\caption{{Sum throughput versus $K$ for active IRS aided TDMA and NOMA schemes. }}
\label{MA_comparison}
\vspace{-16pt}
\end{figure}

In this subsection, we compare two types of MA schemes, i.e., TDMA and NOMA, in terms of the sum throughput. In Fig. \ref{MA_comparison}, we study the impact of the number of devices $K$ on the system sum throughput for both ${P_r} = 0$ dBm and ${P_r} = -10$ dBm. For the TDMA-based scheme, we observe that the sum throughput increases as $K$ increases. This is expected since scheduling one more device would always result in a higher sum throughput, which verifies our analysis in Lemma 1. While for the NOMA-based scheme, the sum throughput increases slowly or remains at an almost constant value with the increase of $K$. Since all the devices share the common IRS beamforming vector for data transmission in the NOMA case, the amplification coefficients of the active IRS will be restricted due to \eqref{C6-e} as more devices are scheduled to transmit simultaneously. To fully unleash the potential amplifying capability of the active IRS, some devices may not be scheduled or transmit at its maximum power for maximizing the sum throughput, which renders that the sum throughput increases slowly with $K$ for the NOMA-based scheme.

Moreover, it is observed that for $K \ge 2$, the TDMA-based scheme always outperforms its NOMA-based counterpart and the performance gap becomes more pronounced with the increase of $K$. This is because the TDMA-based scheme can enjoy more degrees of freedom to flexibly adjust each device's dedicated IRS beamforming vector for further boosting the sum throughput compared to NOMA. As such, the TDMA-based scheme may be more preferable for the active IRS aided system at the cost of extra signalling overhead.

\subsection{Performance Comparison for Active IRS or Passive IRS}
\begin{figure}
\begin{minipage}[t]{0.45\linewidth}
\centering
\includegraphics[width=2.9in]{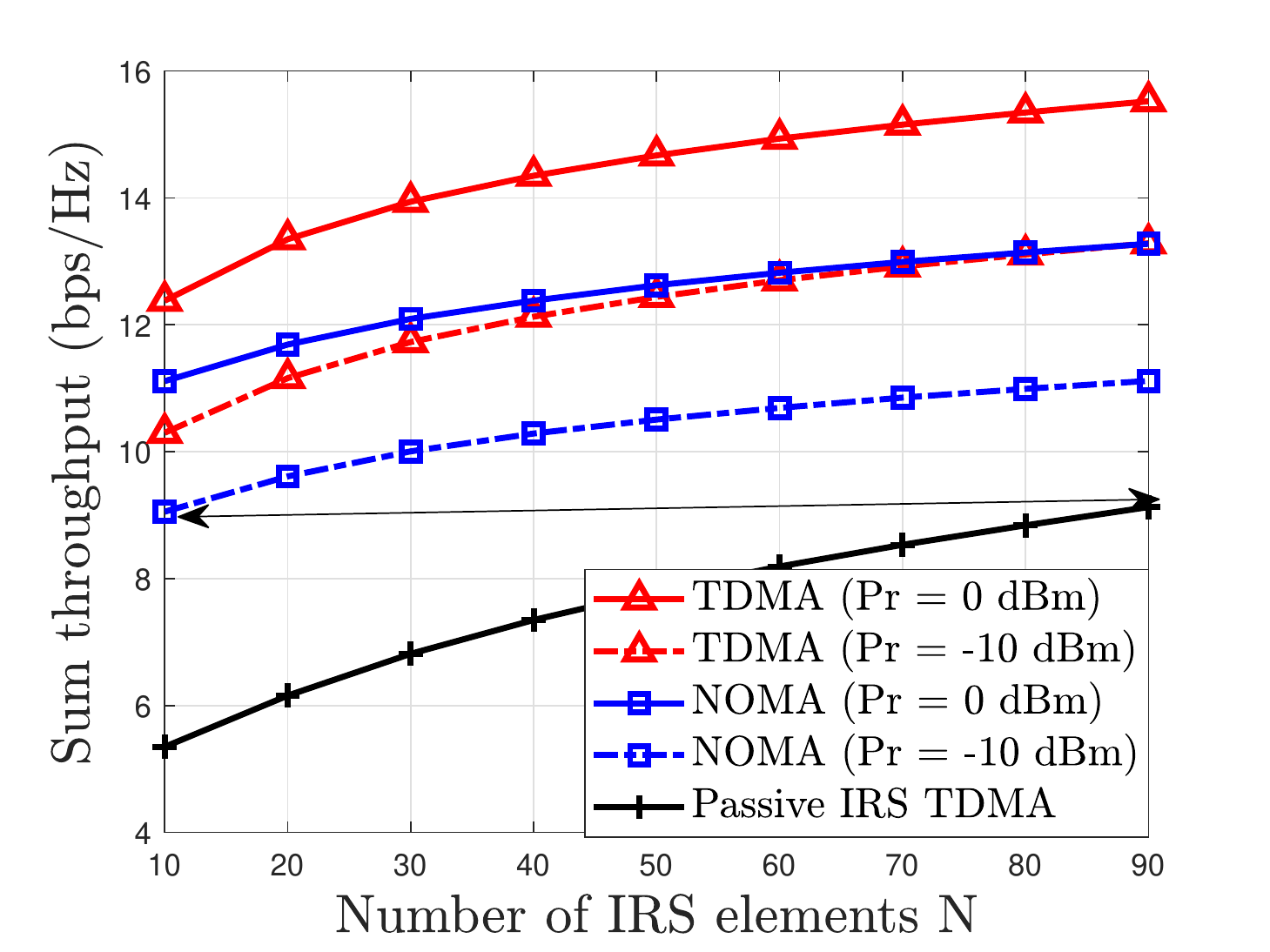}
\caption{Sum throughput versus the number of IRS elements with ${E_k} = 0.01{\rm{ J}}$ and $K = 10$.}
\label{element}
\end{minipage}%
\hfill
\begin{minipage}[t]{0.45\linewidth}
\centering
\includegraphics[width=2.9in]{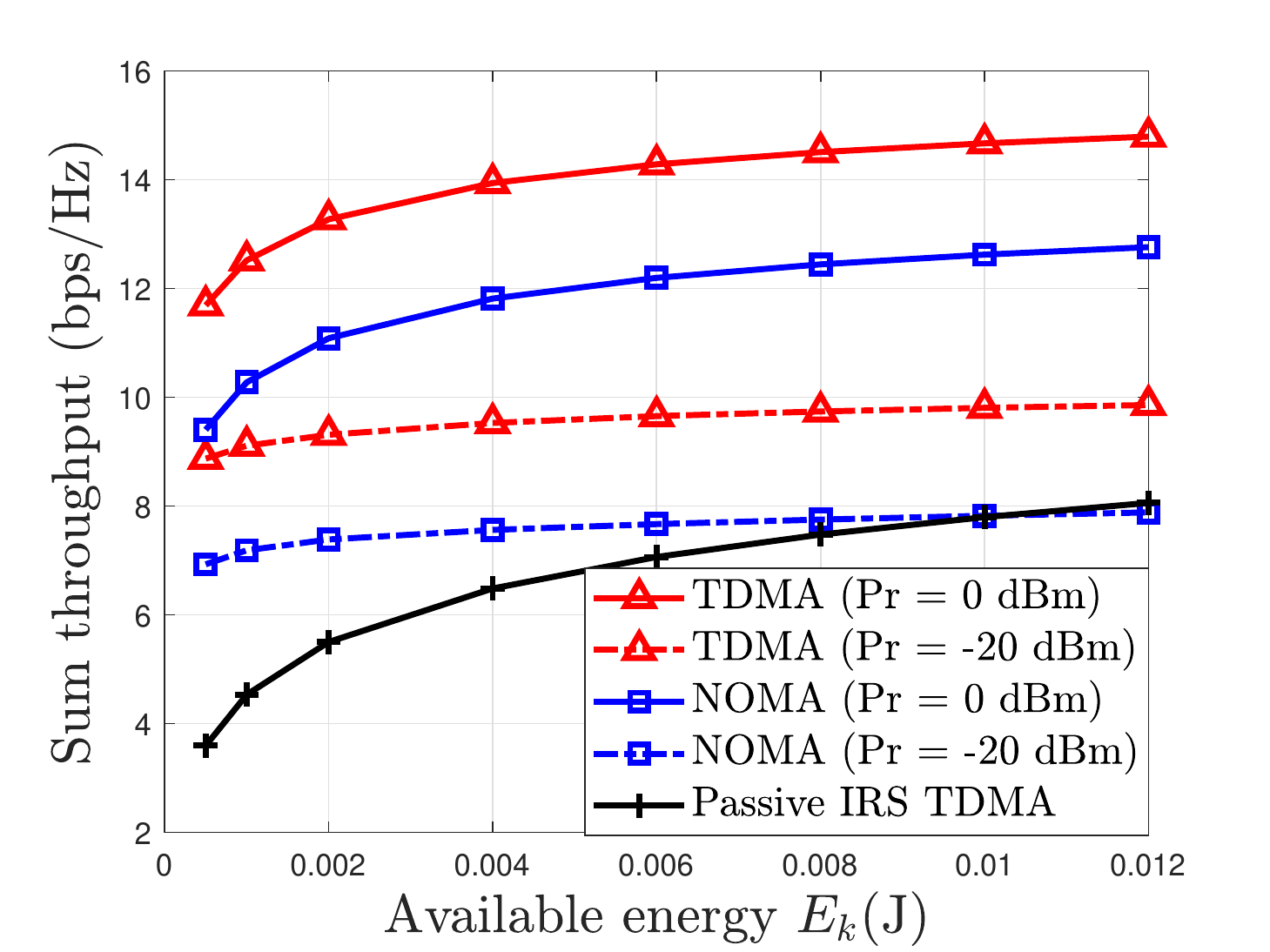}
\caption{Sum throughput versus the initial energy ${E_k}$ with $K = 10$.}
\label{energy}
\end{minipage}
\vspace{-16pt}
\end{figure}
In this subsection, we provide performance comparisons for the active IRS and passive IRS-aided energy-constrained IoT systems under different setups. For the passive IRS-aided system, we employ the algorithm proposed in \cite{wu2021irs} for obtaining the sum throughput of the TDMA-based scheme since it can characterize the upper bound performance of the passive IRS architecture in our considered scenarios.
\subsubsection{Impact of Number of IRS Elements}
In Fig. \ref{element}, we plot the system sum throughput versus the number of IRS elements $N$ for both ${P_r} = 0$ dBm and ${P_r} = -10$ dBm. It is observed that the sum throughput of all the schemes monotonically increases with $N$ since more reflecting/amplfication elements help achieve higher IRS beamforming gains, which is beneficial for improving the power of the received signals. Moreover, one can observe that even exploiting the active IRS for the NOMA-based transmission (i.e., one active beamforming vector) can achieve significant gains over employing passive IRS for the TDMA-based transmission (i.e., $K$ passive beamforming vectors) in terms of the system sum throughput. Considering the TDMA-based scheme relies on more IRS beamforming adjustments, the result indicates that incorporating the active IRS into IoT systems leads to a higher throughput with lower signalling overhead compared to the passive IRS. In addition, the performance gap between the two types of IRS architectures is more pronounced in the low $N$ regime. Note that to achieve 10 bps/Hz throughput, the required number of IRS elements is reduced from 90 to 10 via replacing the passive IRS by the active IRS, which indicates that the required number of IRS elements can be greatly reduced when employing the active IRS. The reason is that the value of the amplification coefficient at each element becomes larger when reducing the number of IRS elements. The higher amplification coefficient can effectively compensate the performance loss induced by reducing the number of IRS elements.
\subsubsection{Impact of Available Energy ${E_k}$}
In Fig. \ref{energy}, we study the impact of the available energy on our considered systems, by plotting the sum throughput versus the available energy at each device. From Fig. \ref{energy}, it is observed that the active IRS can significantly improve the sum throughput as compared to the case of the passive IRS, especially when the available energy at each device is low. This is because the power of the incident signal at each IRS element becomes smaller when the available energy at each device is low. As such, the maximum allowed amplification coefficient at each element of the active IRS becomes larger, which compensates the performance loss caused by the low available energy at devices. It implies that the active IRS is a more promising architecture for supporting multiple low-energy devices compared to the conventional passive IRS.
In addition, the sum throughput of active IRS-aided systems for the case of ${P_r} = -20$ dBm is less sensitive to ${E_k}$ compare to that of ${P_r} = 0$ dBm. The reason is that the lower value of ${P_r}$ would limit the amplification amplitude at each IRS element especially when the transmit power of each device is high.
\subsubsection{Impact of Distance Between AP and Device Center}
In Fig. \ref{coverage}, we investigate the coverage performance of our considered active IRS-aided systems, by plotting the sum throughput versus the distance between the AP and the center of devices cluster. It is expected that the sum throughput of all the considered schemes decreases as the distance increases. However, for both TDMA and NOMA cases, the decreasing rate of the active IRS-aided systems is much slower than that of the passive IRS-aided systems. The reason is that the power of the incident signals at each IRS element becomes smaller as the transmission distance increases, this rendering the maximum allowed amplification coefficient at each element of the active IRS to become larger. The larger amplification coefficient can effectively compensate the performance loss induced by increasing the transmission distance. Thus, the transmission coverage increases significantly by deploying the active IRS. This phenomenon demonstrates the effectiveness of the active IRS for achieving coverage extension compared to that of the conventional passive IRS.
\subsubsection{Impact of Deployment}
\begin{figure}
\begin{minipage}[t]{0.45\linewidth}
\centering
\includegraphics[width=2.9in]{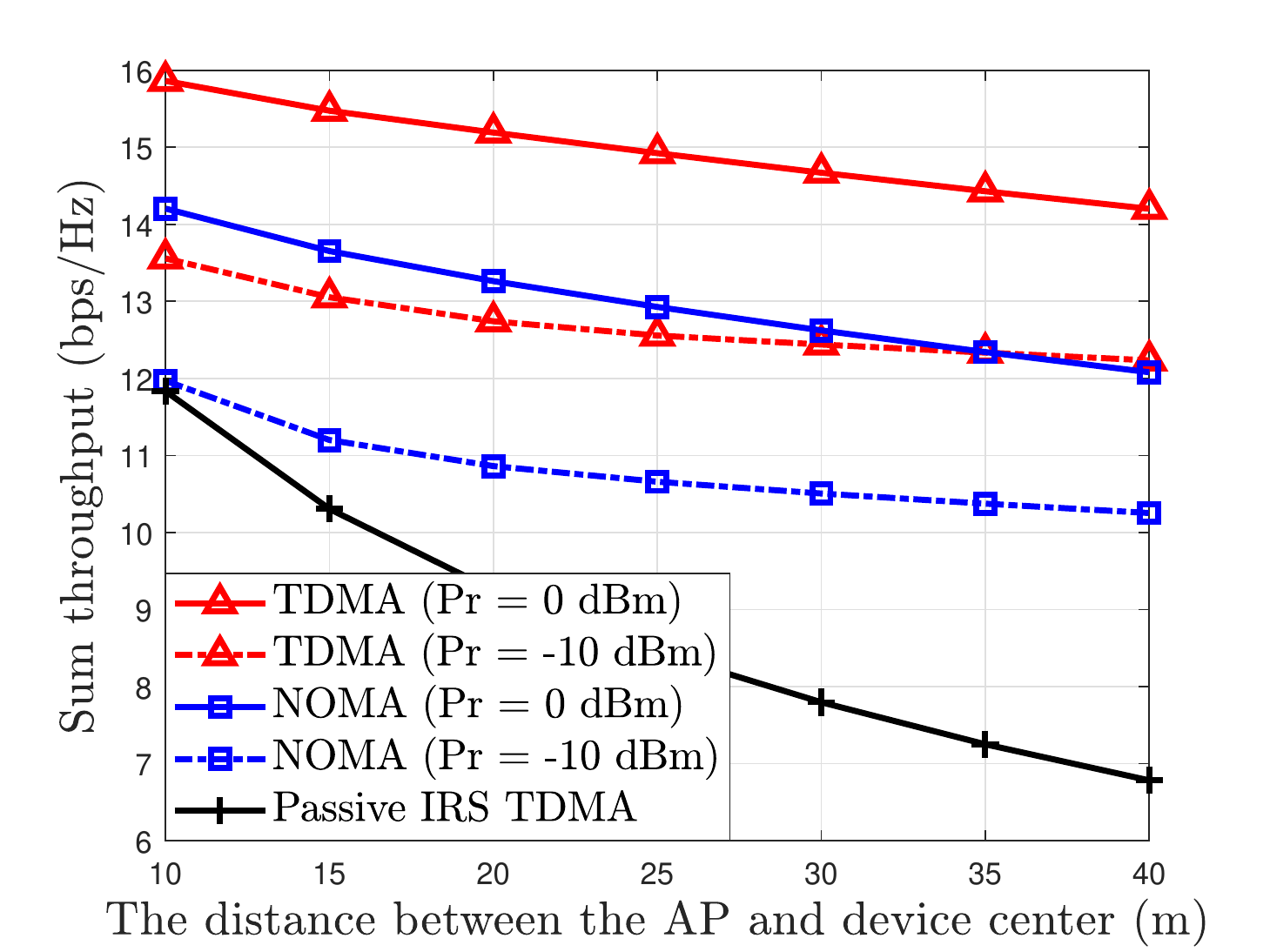}
\caption{Sum throughput versus the distance between AP and device center with ${E_k} = 0.01{\rm{ J}}$ and $K = 10$.}
\label{coverage}
\end{minipage}%
\hfill
\begin{minipage}[t]{0.45\linewidth}
\centering
\includegraphics[width=2.9in]{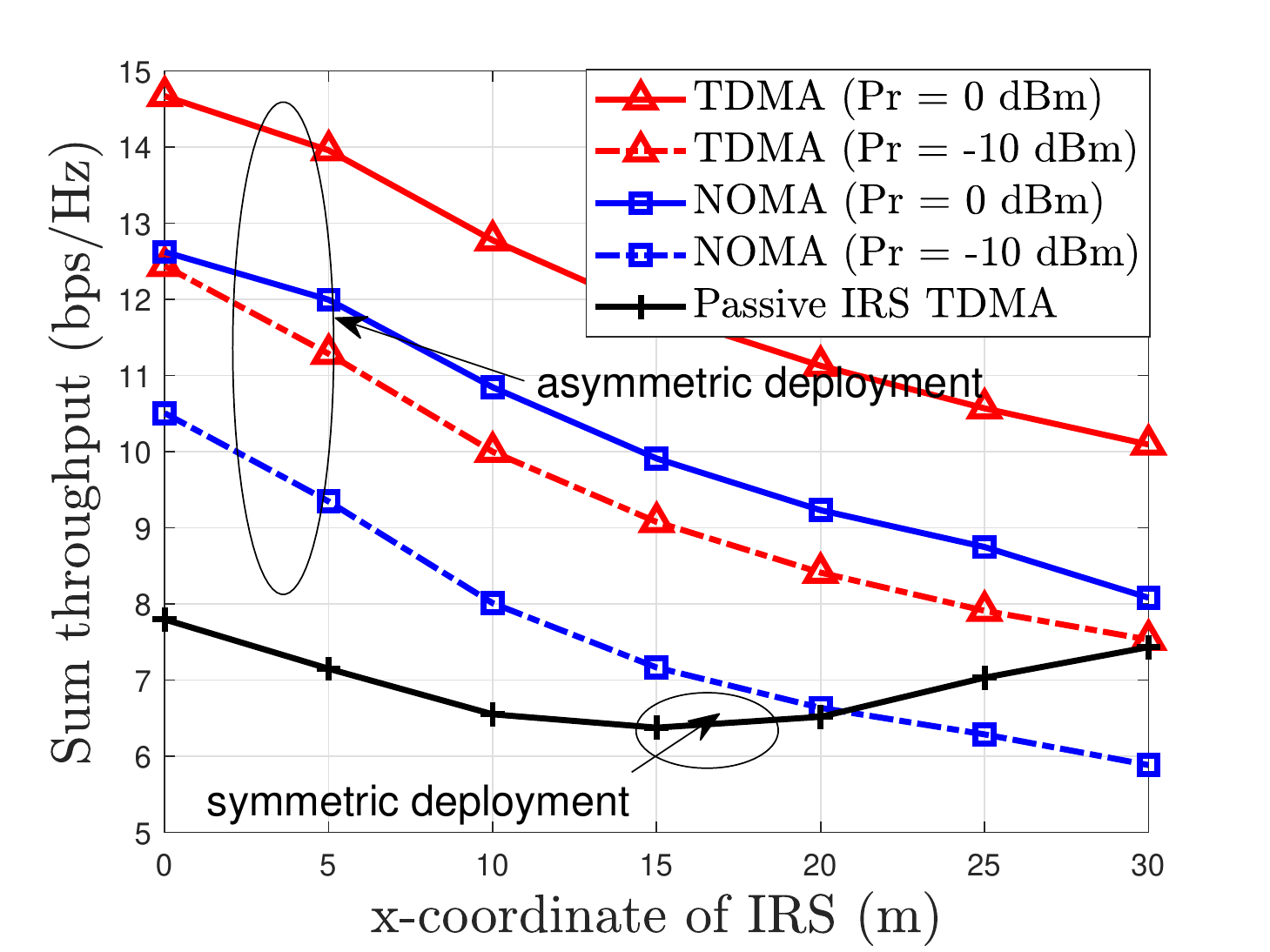}
\caption{Sum throughput versus the location of IRS with ${E_k} = 0.01{\rm{ J}}$ and $K = 10$.}
\label{deployment}
\end{minipage}
\vspace{-16pt}
\end{figure}
In Fig. \ref{deployment}, the sum throughput versus the x-coordinate of the IRS ${x_{{\rm{IRS}}}}$ is plotted to demonstrate the impact of the IRS deployment on the performance of our considered systems. For conventional passive IRS-aided systems, it was shown in \cite{9326394} that the IRS should be deployed in the close proximity to the AP or the device. While for the active IRS, it can be observed that the sum throughput decreases significantly when moving the IRS farther from the AP for both TDMA and NOMA cases. This is because the received signal power at the IRS gets stronger as the IRS comes closer to the devices (transmitters), which greatly limits the amplification gains at IRS elements. The main bottleneck restricting the performance becomes the deep fading of the IRS-AP link in this case. This phenomenon implies that for the active IRS, it is better to deploy it close to the AP (receiver). Additionally, the results indicate that different MA schemes have no impact on the deployment of the active IRS. Moreover, one can observe that the performance of active IRS-aided systems is even worse than that of the passive IRS if the position of the IRS is not properly selected, which highlights the importance of the active IRS deployment to unlock its full potential in maximizing the sum throughput.

\subsection{Performance Evaluation for Hybrid MA Schemes}
\begin{figure}
\begin{minipage}[t]{0.45\linewidth}
\centering
\includegraphics[width=2.9in]{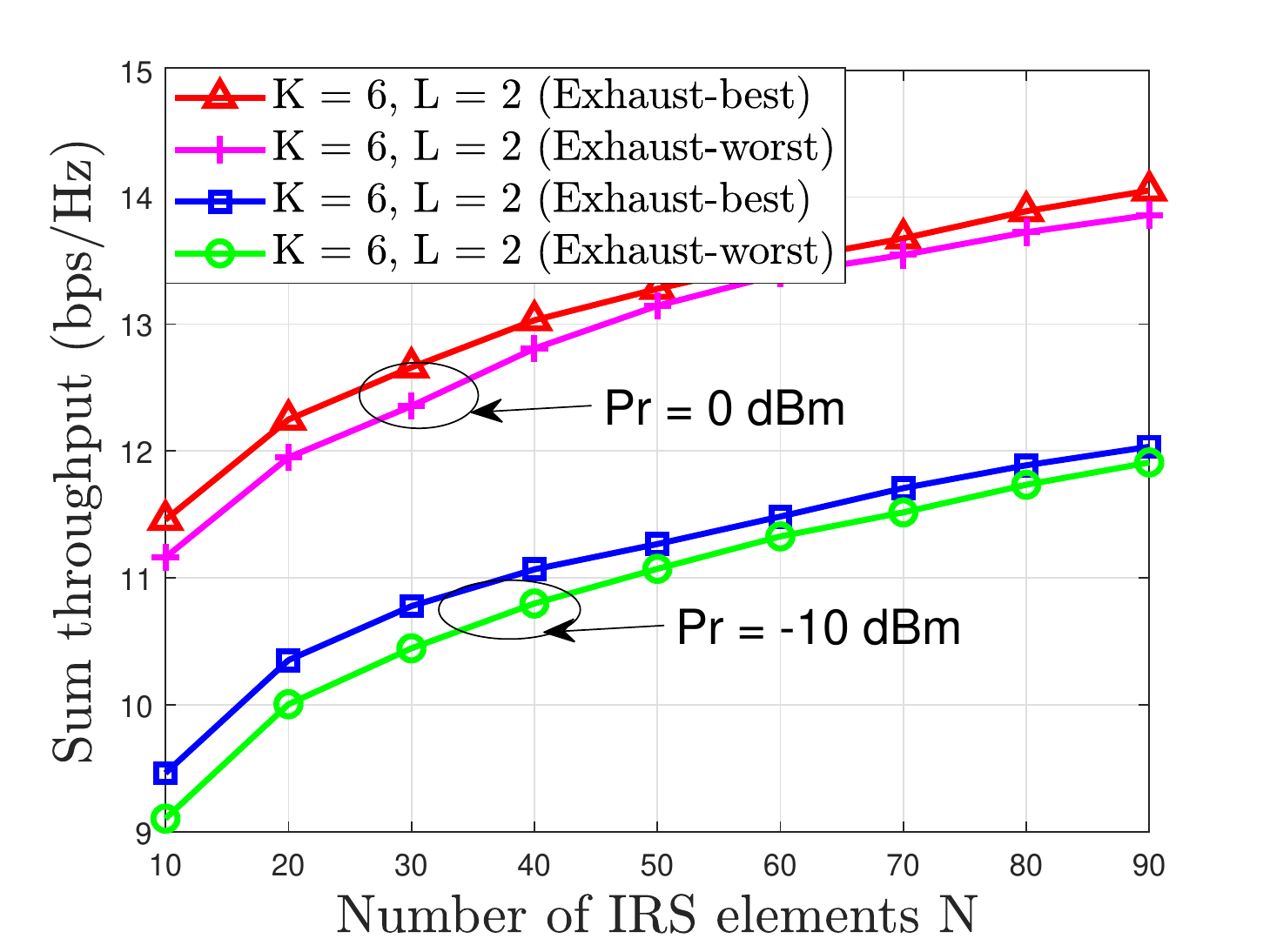}
\caption{Impact of user grouping on the sum throughput with ${E_k} = 0.01{\rm{ J}}$ and $K = 6$.}
\label{grouping}
\end{minipage}%
\hfill
\begin{minipage}[t]{0.45\linewidth}
\centering
\includegraphics[width=2.9in]{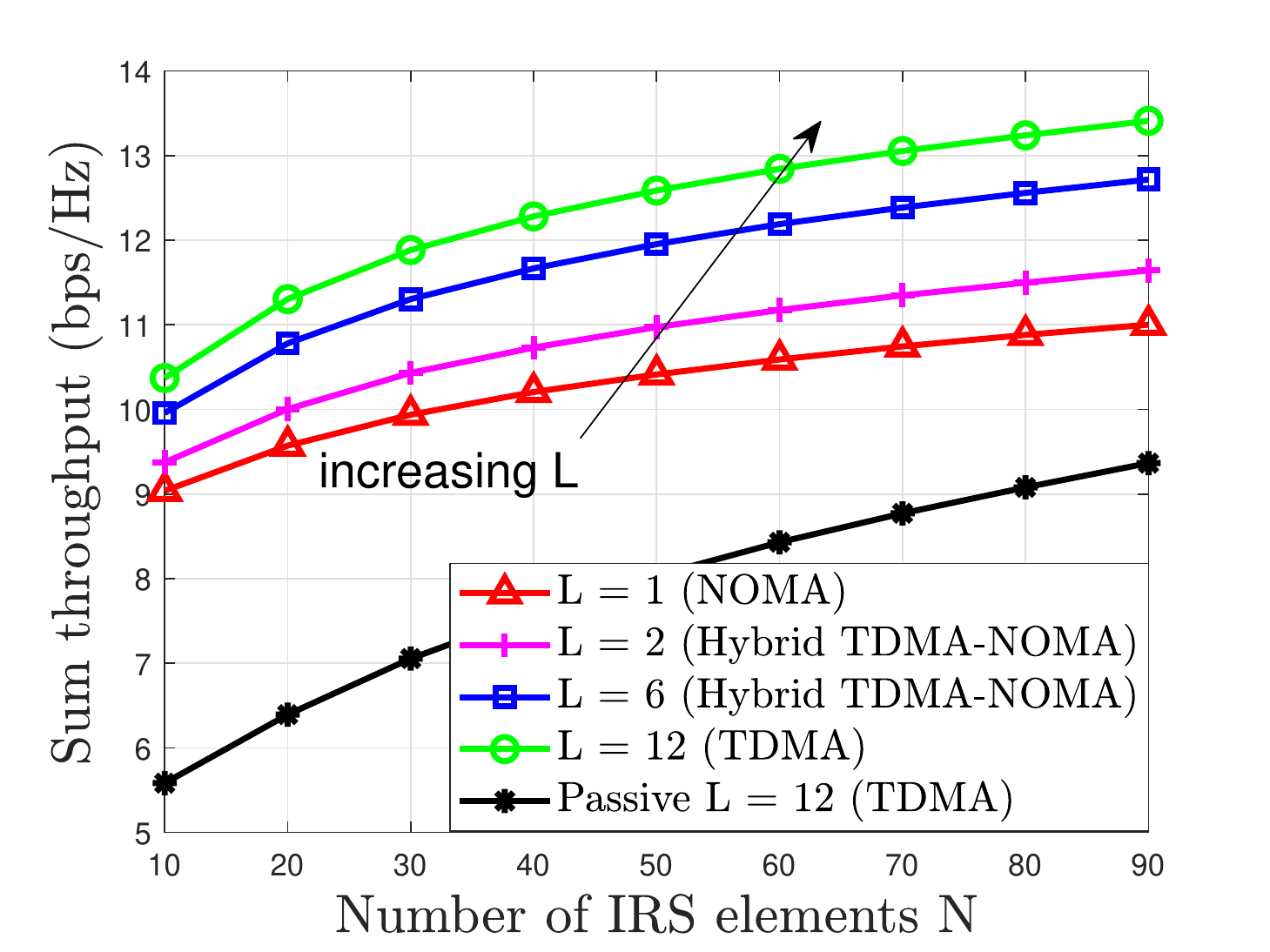}
\caption{Performance evaluation for hybrid MA schemes with ${E_k} = 0.01{\rm{ J}}$ and $K = 12$.}
\label{hybrid}
\end{minipage}
\vspace{-16pt}
\end{figure}
In this subsection, we provide numerical results for evaluating the performance of our proposed hybrid TDMA-NOMA scheme in terms of the sum throughput under different setups.
\subsubsection{Impact of User Grouping}
Before discussing the system performance of the proposed hybrid TDMA-NOMA scheme, we first shed light on the impact of user grouping. Specifically, we compare the following two cases: 1) ``Exhaust-best'' denotes the case where the optimal user grouping set is obtained through exhaustive search; 2) ``Exhaust-worst'' denotes the case where the IRS beamforming and resource allocation are optimized with the worst user grouping set obtained by exhaustive search. As shown in Fig. \ref{grouping}, ``Exhaust-best'' only achieves marginal gains (less than 2\%) over ``Exhaust-worst'' for both the cases of ${P_r} = 0$ dBm and ${P_r} = -10$ dBm, which implies that user grouping is not significant in our considered system. It is worth noting that developing sophisticated algorithms to obtain the optimal user grouping will be extremely time-consuming and computation-expensive, especially when $K$ becomes larger. Regarding an overloaded scenario, the results suggest that the random user grouping method can be applied in our considered systems with a negligible performance loss.

\subsubsection{Throughput Evaluation for Hybrid MA Schemes}
In Fig. \ref{hybrid}, we evaluate the performance of the hybrid TDMA-NOMA scheme, by plotting the system sum throughput versus the number of IRS elements under the different number of user groups, i.e., $L$. It is observed that the sum throughput achieved by our proposed algorithm increases as $L$ increases. The results are expected since exploiting more IRS beamforming vectors is indeed beneficial for the throughput improvement of our considered energy-constrained IoT systems. Furthermore, one can observe that the performance gap between $L = 6$ and $L = 12$ is slight (less than 5\%). Note that the number of IRS beamforming coefficients to be sent from the AP to the IRS controller is given by $LN$. By setting two devices in each group, the signalling overhead can be reduced by 50\% compared to the TDMA-based scheme at the cost of slight performance loss, which renders the hybrid TDMA-NOMA scheme a promising approach to balance the performance-overhead tradeoff. Finally, we observe that the active IRS aided system employing NOMA significantly outperforms that of the passive IRS-aided system employing TDMA while maintaining lower signalling overhead, which highlights the potential benefits of active IRS architectures for achieving both the higher throughput and lower signaling overhead compared to the conventional passive IRS.

\section{Conclusion}
Considering different MA schemes, we investigated an active IRS aided energy-constrained IoT system to maximize the sum throughput by jointly optimizing IRS beamforming vectors and resource allocation. Specifically, we first studied a couple of MA schemes, namely TDMA and NOMA. By deeply exploiting inherent properties of their associated optimization problems, we proposed two dedicated algorithms to solve them efficiently. Moreover, the theoretical performance comparison for the active IRS aided TDMA and NOMA schemes was provided. The results demonstrate that TDMA can potentially achieve a higher throughput than that of NOMA at the cost of more IRS beamforming vectors. Aiming at providing high flexibility in balancing the performance and signalling overhead tradeoff, we further develop a novel hybrid TDMA-NOMA scheme, which is applicable for any given number of IRS beamforming vectors available. The AO-based algorithm was extended to solve its associated sum throughput maximization problem. Numerical results validated our theoretical findings and unveiled the effectiveness of the active IRS architecture over the conventional passive IRS in terms of extending coverage range, reducing the requirement of reflecting elements, and supporting multiple low-energy IoT devices. Moreover, using the hybrid TDMA-NOMA scheme for assisting data transmission can be a practically appealing approach for flexibly balancing performance-overhead tradeoff, especially for IoT networks with practically large number of devices. The results in this paper demonstrated that integrating hybrid TDMA-NOMA scheme into active IRS-aided energy-constrained IoT systems is a promising solution for supporting massive number of energy-constrained IoT devices.
\section*{Appendix A: \textsc{Proof of Lemma 1}}
We first prove that all devices will be scheduled at the optimal solution, i.e., $\tau _k^* > 0,\forall k$. We show it by contradiction. Suppose that ${\Xi ^*} = \left\{ {\tau _k^*,p_k^*,{\bf{v}}_k^*} \right\}$ achieves the optimal solution of problem \eqref{C7} and there exists a device $j$ who will not be scheduled, i.e., $\tau _j^* = 0$. Then, we construct a different solution $\tilde \Xi  = \left\{ {{{\tilde \tau }_k},{{\tilde p}_k},{{{\bf{\tilde v}}}_k}} \right\}$, where ${{{\bf{\tilde v}}}_k} = {\bf{v}}_k^*\left( {k \ne j} \right)$, ${{{\bf{\tilde v}}}_j} = {\bf{0}}$, and
\begin{align}\label{new_construct_time}
{{\tilde \tau }_k} = \begin{cases}{\tau _k^*}, &{k \ne j,{\rm{ }}k \ne m}, k \in {\cal K}, \cr {\tau _k^* - \Delta {\tau _k}},
&{k = m}, m \in {\cal K}, \cr {\Delta {\tau _k}}, &{k = j}, j \in {\cal K}, \end{cases}
\end{align}
\begin{align}\label{new_construct_power}
{{\tilde p}_k} = \begin{cases}p_k^*, &{k \ne j}, k \in {\cal K}, \cr \frac{{{E_j}}}{{\Delta {\tau _m}}},
&{k = j},  j \in {\cal K}, m \in {\cal K}. \end{cases}
\end{align}
It can be readily verified that the newly constructed solution ${\tilde \Xi }$ is also a feasible solution for problem \eqref{C7} since it satisfies all the constraints therein. Since the solutions regarding the transmit power, the time allocation, and beamforming vector for any device ${k \ne j,{\rm{ }}k \ne m}$ remain unchanged in ${\Xi ^*}$ and ${\tilde \Xi }$, we only need to compare the throughput contributed by device $j$ and device $m$ for the corresponding two solutions. For the solution ${\Xi ^*}$, we have
\begin{align}\label{throughput_1}
R_m^* \!+\! R_j^* \!\!=\!\! \left( {\tau _m^* \!\!-\!\! \Delta {\tau _m}} \right){\log _2}\left( {1 \!\!+\!\! \frac{{{p_m}{{\left| {{h_{d,k}} + {\bf{v}}_m^H{{\bf{q}}_m}} \right|}^2}}}{{{\sigma ^2} \!\!+\!\! \sigma _r^2{\bf{v}}_m^H{\bf{G}}{{\bf{v}}_m}}}} \right) \!+\! \Delta {\tau _m}{\log _2}\left( {1 \!+\! \frac{{{p_m}{{\left| {{h_{d,k}} + {\bf{v}}_m^H{{\bf{q}}_m}} \right|}^2}}}{{{\sigma ^2} + \sigma _r^2{\bf{v}}_m^H{\bf{G}}{{\bf{v}}_m}}}} \right).
\end{align}
For the newly constructed solution ${\tilde \Xi }$, we have
\begin{align}\label{throughput_2}
{{\tilde R}_m} + {{\tilde R}_j} = \left( {\tau _m^* - \Delta {\tau _m}} \right){\log _2}\left( {1 + \frac{{{p_m}{{\left| {{h_{d,k}} + {\bf{v}}_m^H{{\bf{q}}_m}} \right|}^2}}}{{{\sigma ^2} + \sigma _r^2{\bf{v}}_m^H{\bf{G}}{{\bf{v}}_m}}}} \right) + \Delta {\tau _m}{\log _2}\left( {1 + \frac{{{E_j}}}{{\Delta {\tau _m}}}{{\left| {{h_{d,j}}} \right|}^2}} \right).
\end{align}
For a given positive value
\begin{align}\label{condition}
\varepsilon  = \frac{{{E_j}\left( {{\sigma ^2} + \sigma _r^2{\bf{v}}_m^H{\bf{G}}{{\bf{v}}_m}} \right){{\left| {{h_{d,j}}} \right|}^2}}}{{{p_m}{{\left| {{h_{d,k}} + {\bf{v}}_m^H{{\bf{q}}_m}} \right|}^2}}},
\end{align}
it can be easily verified that ${{\tilde R}_m} + {{\tilde R}_j} > R_m^* + R_j^*$ when $\Delta {\tau _m} < \varepsilon$. This means that the newly constructed solution ${\tilde \Xi }$ achieves a higher throughput than ${\Xi ^*}$, which contradicts the assumption that ${\Xi ^*}$ is the optimal solution. As such, we have $\tau _k^* > 0,\forall k$.

We next prove $p_k^* = {{{E_k}} \mathord{\left/
 {\vphantom {{{E_k}} {\tau _k^*}}} \right.
 \kern-\nulldelimiterspace} {\tau _k^*}}$ by contradiction. Suppose that $\left\{ {p_k^*,{\bf{v}}_k^*} \right\}$ is the optimal transmit power and IRS beamforming vector for device $k$ and $p_k^*{{ < {E_k}} \mathord{\left/
 {\vphantom {{ < {E_k}} {\tau _k^*}}} \right.
 \kern-\nulldelimiterspace} {\tau _k^*}}$. We can always construct a new solution denoted by $\left\{ {{{\tilde p}_k},{{{\bf{\tilde v}}}_k}} \right\}$ which satisfies $p_k^* < {{\tilde p}_k} \le {{{E_k}} \mathord{\left/
 {\vphantom {{{E_k}} {\tau _k^*}}} \right.
 \kern-\nulldelimiterspace} {\tau _k^*}}$ and ${{{\bf{\tilde v}}}_k} = \sqrt {{{p_k^*} \mathord{\left/
 {\vphantom {{p_k^*} {{{\tilde p}_k}}}} \right.
 \kern-\nulldelimiterspace} {{{\tilde p}_k}}}} {\bf{v}}_k^*$. It can be readily verified that
 \begin{align}\label{constraint_verify}
{{\tilde p}_k}{\bf{\tilde v}}_k^H{{\bf{H}}_{r,k}}{{{\bf{\tilde v}}}_k} + \sigma _r^2{\left\| {{{{\bf{\tilde v}}}_k}} \right\|^2} < p_k^*{\bf{v}}_k^{*H}{{\bf{H}}_{r,k}}{\bf{v}}_k^* + \sigma _r^2{\left\| {{\bf{v}}_k^*} \right\|^2} \le {P_r},
\end{align}
which indicates that $\left\{ {{{\tilde p}_k},{{{\bf{\tilde v}}}_k}} \right\}$ is feasible for problem \eqref{C7}. We further compare the objective values for the two solutions, namely $\left\{ {p_k^*,{\bf{v}}_k^*} \right\}$ and $\left\{ {{{\tilde p}_k},{{{\bf{\tilde v}}}_k}} \right\}$, as follows
 \begin{align}\label{obj_value_compare}
\frac{{p_k^*{{\left| {{h_{d,k}} + {\bf{v}}_k^{*H}{{\bf{q}}_k}} \right|}^2}}}{{{\sigma ^2} + \sigma _r^2{\bf{v}}_k^{*H}{\bf{Gv}}_k^*}}\mathop  < \limits^{\left( a \right)} \frac{{{{\tilde p}_k}{{\left| {{h_{d,k}} + {\bf{\tilde v}}_k^H{{\bf{q}}_k}} \right|}^2}}}{{{\sigma ^2} + \sigma _r^2{\bf{\tilde v}}_k^H{\bf{G}}{{{\bf{\tilde v}}}_k}}},
\end{align}
where inequality ${\left( a \right)}$ holds due to $p_k^*{\left| {{h_{d,k}}} \right|^2} < {{\tilde p}_k}{\left| {{h_{d,k}}} \right|^2}$, $p_k^*{\mathop{\rm Re}\nolimits} \left( {{h_{d,k}}{\bf{v}}_k^{*H}{{\bf{q}}_k}} \right) < {{\tilde p}_k}{\mathop{\rm Re}\nolimits} \left( {{h_{d,k}}{\bf{\tilde v}}_k^H{{\bf{q}}_k}} \right)$, $p_k^*{\left| {{\bf{v}}_k^{*H}{{\bf{q}}_k}} \right|^2} = {{\tilde p}_k}{\left| {{\bf{\tilde v}}_k^H{{\bf{q}}_k}} \right|^2}$, and ${\bf{v}}_k^{*H}{\bf{Gv}}_k^* > {\bf{\tilde v}}_k^H{\bf{G}}{{{\bf{\tilde v}}}_k}$. This means that the newly constructed solution $\left\{ {{{\tilde p}_k},{{{\bf{\tilde v}}}_k}} \right\}$ achieves a higher objective value than that of $\left\{ {p_k^*,{\bf{v}}_k^*} \right\}$, which contradicts the assumption that $\left\{ {p_k^*,{\bf{v}}_k^*} \right\}$ is optimal. Thus, the optimal transmit power satisfies $p_k^* = {{{E_k}} \mathord{\left/
 {\vphantom {{{E_k}} {\tau _k^*}}} \right.
 \kern-\nulldelimiterspace} {\tau _k^*}},\forall k$.
\section*{Appendix B: \textsc{Proof of Proposition 1}}
To prove Proposition 1, we first introduce a pair of problem formulations corresponding to problem \eqref{C11} and \eqref{C18} without constraints \eqref{C11-e} and \eqref{C18-c} as follows, respectively,
\begin{subequations}\label{C200}
\begin{align}
\label{C200-a}\mathop {\max }\limits_{{\tau},{p_k},{{\bf{v}}}}  \;\;&\tau {\log _2}\left( {1 + \frac{{\sum\nolimits_{k = 1}^K {{p_k}{{\left| {{h_{d,k}} + {{\bf{v}}^H}{{\bf{q}}_k}} \right|}^2}} }}{{{\sigma ^2} + \sigma _r^2{{\bf{v}}^H}{\bf{Gv}}}}} \right)\\
\label{C200-b}{\rm{s.t.}}\;\;&\eqref{C6-b}, \eqref{C6-c}, \eqref{C6-d}.
\end{align}
\end{subequations}
\begin{subequations}\label{C201}
\vspace{-11pt}
\begin{align}
\label{C201-a}\mathop {\max }\limits_{{\tau _k},{p_k},{{\bf{v}}}}  \;\;&\sum\limits_{k = 1}^K {{\tau _k}{{\log }_2}\left( {1 + \frac{{{p_k}{{\left| {{h_{d,k}} + {{\bf{v}}^H}{{\bf{q}}_k}} \right|}^2}}}{{{\sigma ^2} + \sigma _r^2{{\bf{v}}^H}{\bf{Gv}}}}} \right)}\\
\label{C201-b}{\rm{s.t.}}\;\;\;&\eqref{C5-b}, \eqref{C5-c}, \eqref{C5-d}.
\end{align}
\end{subequations}
The optimal values of problem \eqref{C200} and \eqref{C201} are denoted by $R_{{\rm{NOMA}}}^{{\rm{ub}}}$ and $R_{{\rm{TDMA}}}^{{\rm{ub}}}$, respectively. Firstly, we can prove $R_{{\rm{TDMA}}}^{{\rm{ub}}} = R_{{\rm{NOMA}}}^{{\rm{ub}}}$ by showing $R_{{\rm{TDMA}}}^{{\rm{ub}}} \le R_{{\rm{NOMA}}}^{{\rm{ub}}}$ and $R_{{\rm{TDMA}}}^{{\rm{ub}}} \ge R_{{\rm{NOMA}}}^{{\rm{ub}}}$.

The procedure starts by showing that $R_{{\rm{TDMA}}}^{{\rm{ub}}} \le R_{{\rm{NOMA}}}^{{\rm{ub}}}$. For problem \eqref{C201}, the optimal transmit power of each device, denoted by ${{\mathord{\buildrel{\lower3pt\hbox{$\scriptscriptstyle\smile$}}
\over p} }_k}$, can be expressed as ${{\mathord{\buildrel{\lower3pt\hbox{$\scriptscriptstyle\smile$}}
\over p} }_k} = {{{E_k}} \mathord{\left/
 {\vphantom {{{E_k}} {{\tau _k}}}} \right.
 \kern-\nulldelimiterspace} {{\tau _k}}}$, because each device will deplete all of its energy. To this end, we discuss some properties about the optimal time allocation for problem \eqref{C201}, i.e., ${{\mathord{\buildrel{\lower3pt\hbox{$\scriptscriptstyle\smile$}}
\over \tau } }_k}$. Given the optimal IRS beamforming vector ${\bf{v}} = {\bf{\mathord{\buildrel{\lower3pt\hbox{$\scriptscriptstyle\smile$}}
\over v} }}$, problem \eqref{C201} can be simplified by optimizing ${{\tau _k}}$ as follows
\begin{subequations}\label{C202}
\begin{align}
\label{C202-a}\mathop {\max }\limits_{{\tau _k}}  \;\;&\sum\limits_{k = 1}^K {{\tau _k}{{\log }_2}\left( {1 + \frac{{{E_k}{{\left| {{h_{d,k}} + {{{\bf{\mathord{\buildrel{\lower3pt\hbox{$\scriptscriptstyle\smile$}}
\over v} }}}^H}{{\bf{q}}_k}} \right|}^2}}}{{{\tau _k}\left( {{\sigma ^2} + \sigma _r^2{{{\bf{\mathord{\buildrel{\lower3pt\hbox{$\scriptscriptstyle\smile$}}
\over v} }}}^H}{\bf{G\mathord{\buildrel{\lower3pt\hbox{$\scriptscriptstyle\smile$}}
\over v} }}} \right)}}} \right)} \\
\label{C202-b}{\rm{s.t.}}\;\;&\eqref{C5-c}.
\end{align}
\end{subequations}
Note that problem \eqref{C202} is a convex optimization problem and its Lagrangian function is
\begin{align}\label{Lagrangian function}
{\cal L}\left( {{\tau _k},\lambda } \right) = \sum\limits_{k = 1}^K {{\tau _k}{{\log }_2}\left( {1 + \frac{{{E_k}{{\left| {{h_{d,k}} + {{{\bf{\mathord{\buildrel{\lower3pt\hbox{$\scriptscriptstyle\smile$}}
\over v} }}}^H}{{\bf{q}}_k}} \right|}^2}}}{{{\tau _k}\left( {{\sigma ^2} + \sigma _r^2{{{\bf{\mathord{\buildrel{\lower3pt\hbox{$\scriptscriptstyle\smile$}}
\over v} }}}^H}{\bf{G\mathord{\buildrel{\lower3pt\hbox{$\scriptscriptstyle\smile$}}
\over v} }}} \right)}}} \right)}  + \lambda \left( {{T_{\max }} - \sum\limits_{k = 1}^K {{\tau _k}} } \right),
\end{align}
where $\lambda  \ge 0$ is the dual variable associated with \eqref{C202-b}. According to Karush-Kuhn-Tucker (KKT) conditions, we have
\begin{align}\label{Lagrangian function}
\frac{{\partial {\cal L}\left( {{\tau _k},\lambda } \right)}}{{\partial {\tau _k}}} = \Gamma \left( {{\Upsilon _k}} \right) \buildrel \Delta \over = {\log _2}\left( {1 + {\Upsilon _k}} \right) - \frac{{{\Upsilon _k}}}{{\left( {1 + {\Upsilon _k}} \right)\ln 2}} - \lambda  = 0,
\end{align}
where
\begin{align}\label{gamma_K}
{\Upsilon _k} = \frac{{{E_k}{{\left| {{h_{d,k}} + {{{\bf{\mathord{\buildrel{\lower3pt\hbox{$\scriptscriptstyle\smile$}}
\over v} }}}^H}{{\bf{q}}_k}} \right|}^2}}}{{{\tau _k}\left( {{\sigma ^2} + \sigma _r^2{{{\bf{\mathord{\buildrel{\lower3pt\hbox{$\scriptscriptstyle\smile$}}
\over v} }}}^H}{\bf{G\mathord{\buildrel{\lower3pt\hbox{$\scriptscriptstyle\smile$}}
\over v} }}} \right)}},
\end{align}
can be regarded as the received signal-to-noise ratio (SNR) of device $k$. Since $\Gamma \left( {{\Upsilon _k}} \right)$ is an increasing function with respect to ${{\Upsilon _k}}$ and $\Gamma \left( 0 \right) \le 0$, equation $\Gamma \left( {{\Upsilon _k}} \right) = 0$ has a unique solution, which implies that all devices share the same SNR at the optimal solution, i.e.,
\begin{align}\label{SNR_relation}
\frac{{{E_k}{{\left| {{h_{d,k}} + {{{\bf{\mathord{\buildrel{\lower3pt\hbox{$\scriptscriptstyle\smile$}}
\over v} }}}^H}{{\bf{q}}_k}} \right|}^2}}}{{{{\mathord{\buildrel{\lower3pt\hbox{$\scriptscriptstyle\smile$}}
\over \tau } }_k}\left( {{\sigma ^2} + \sigma _r^2{{{\bf{\mathord{\buildrel{\lower3pt\hbox{$\scriptscriptstyle\smile$}}
\over v} }}}^H}{\bf{G\mathord{\buildrel{\lower3pt\hbox{$\scriptscriptstyle\smile$}}
\over v} }}} \right)}} = \frac{{{E_j}{{\left| {{h_{d,k}} + {{{\bf{\mathord{\buildrel{\lower3pt\hbox{$\scriptscriptstyle\smile$}}
\over v} }}}^H}{{\bf{q}}_j}} \right|}^2}}}{{{{\mathord{\buildrel{\lower3pt\hbox{$\scriptscriptstyle\smile$}}
\over \tau } }_j}\left( {{\sigma ^2} + \sigma _r^2{{{\bf{\mathord{\buildrel{\lower3pt\hbox{$\scriptscriptstyle\smile$}}
\over v} }}}^H}{\bf{G\mathord{\buildrel{\lower3pt\hbox{$\scriptscriptstyle\smile$}}
\over v} }}} \right)}},\forall k,j \in {\cal K}.
\end{align}
As such, the optimal value of problem \eqref{C201} can be written as
\begin{align}\label{obj_value_TDMA}
R_{{\rm{TDMA}}}^{{\rm{ub}}} = \mathord{\buildrel{\lower3pt\hbox{$\scriptscriptstyle\smile$}}
\over \tau } {\log _2}\left( {1 + \frac{{\sum\nolimits_{k = 1}^K {{E_k}{{\left| {{h_{d,k}} + {{{\bf{\mathord{\buildrel{\lower3pt\hbox{$\scriptscriptstyle\smile$}}
\over v} }}}^H}{{\bf{q}}_k}} \right|}^2}} }}{{\mathord{\buildrel{\lower3pt\hbox{$\scriptscriptstyle\smile$}}
\over \tau } \left( {\left( {{\sigma ^2} + \sigma _r^2{{{\bf{\mathord{\buildrel{\lower3pt\hbox{$\scriptscriptstyle\smile$}}
\over v} }}}^H}{\bf{G\mathord{\buildrel{\lower3pt\hbox{$\scriptscriptstyle\smile$}}
\over v} }}} \right)} \right)}}} \right),
\end{align}
where $\mathord{\buildrel{\lower3pt\hbox{$\scriptscriptstyle\smile$}}
\over \tau }  = \sum\nolimits_{k = 1}^K {{{\mathord{\buildrel{\lower3pt\hbox{$\scriptscriptstyle\smile$}}
\over \tau } }_k}}$. It can be verified that $\left\{ {\tau  = \mathord{\buildrel{\lower3pt\hbox{$\scriptscriptstyle\smile$}}
\over \tau } ,{p_k} = {{{E_k}} \mathord{\left/
 {\vphantom {{{E_k}} {\mathord{\buildrel{\lower3pt\hbox{$\scriptscriptstyle\smile$}}
\over \tau } }}} \right.
 \kern-\nulldelimiterspace} {\mathord{\buildrel{\lower3pt\hbox{$\scriptscriptstyle\smile$}}
\over \tau } }},{\bf{v}} = {\bf{\mathord{\buildrel{\lower3pt\hbox{$\scriptscriptstyle\smile$}}
\over v} }}} \right\}$ is one feasible solution for problem \eqref{C200}, which yields $R_{{\rm{TDMA}}}^{{\rm{ub}}} \le R_{{\rm{NOMA}}}^{{\rm{ub}}}$.

Next, we show that $R_{{\rm{TDMA}}}^{{\rm{ub}}} \ge R_{{\rm{NOMA}}}^{{\rm{ub}}}$. For problem \eqref{C200}, it can be easily shown that each device will deplete all of its energy and thus the optimal solution of problem \eqref{C200} is denoted by $\left\{ {\tau  = \mathord{\buildrel{\lower3pt\hbox{$\scriptscriptstyle\frown$}}
\over \tau } ,{p_k} = {{{E_k}} \mathord{\left/
 {\vphantom {{{E_k}} {\mathord{\buildrel{\lower3pt\hbox{$\scriptscriptstyle\frown$}}
\over \tau } }}} \right.
 \kern-\nulldelimiterspace} {\mathord{\buildrel{\lower3pt\hbox{$\scriptscriptstyle\frown$}}
\over \tau } }},{\bf{v}} = {\bf{\mathord{\buildrel{\lower3pt\hbox{$\scriptscriptstyle\frown$}}
\over v} }}} \right\}$. Based on the optimal solution of problem \eqref{C200}, we can always construct a new solution, which satisfies $\mathord{\buildrel{\lower3pt\hbox{$\scriptscriptstyle\frown$}}
\over \tau }  = \sum\nolimits_{k = 1}^K {{{\mathord{\buildrel{\lower3pt\hbox{$\scriptscriptstyle\frown$}}
\over \tau } }_k}}$ and ${{\mathord{\buildrel{\lower3pt\hbox{$\scriptscriptstyle\frown$}}
\over p} }_k} = {{{E_k}} \mathord{\left/
 {\vphantom {{{E_k}} {{{\mathord{\buildrel{\lower3pt\hbox{$\scriptscriptstyle\frown$}}
\over \tau } }_k}}}} \right.
 \kern-\nulldelimiterspace} {{{\mathord{\buildrel{\lower3pt\hbox{$\scriptscriptstyle\frown$}}
\over \tau } }_k}}}$, so that all devices share the same received SNR in the TDMA scheme, i.e.,
\begin{align}\label{NOMA_SNR}
\frac{{{E_k}{{\left| {{h_{d,k}} + {{{\bf{\mathord{\buildrel{\lower3pt\hbox{$\scriptscriptstyle\frown$}}
\over v} }}}^H}{{\bf{q}}_k}} \right|}^2}}}{{{{\mathord{\buildrel{\lower3pt\hbox{$\scriptscriptstyle\frown$}}
\over \tau } }_k}\left( {{\sigma ^2} + \sigma _r^2{{{\bf{\mathord{\buildrel{\lower3pt\hbox{$\scriptscriptstyle\frown$}}
\over v} }}}^H}{\bf{G\mathord{\buildrel{\lower3pt\hbox{$\scriptscriptstyle\frown$}}
\over v} }}} \right)}} = \frac{{{E_j}{{\left| {{h_{d,k}} + {{{\bf{\mathord{\buildrel{\lower3pt\hbox{$\scriptscriptstyle\frown$}}
\over v} }}}^H}{{\bf{q}}_j}} \right|}^2}}}{{{{\mathord{\buildrel{\lower3pt\hbox{$\scriptscriptstyle\frown$}}
\over \tau } }_j}\left( {{\sigma ^2} + \sigma _r^2{{{\bf{\mathord{\buildrel{\lower3pt\hbox{$\scriptscriptstyle\frown$}}
\over v} }}}^H}{\bf{G\mathord{\buildrel{\lower3pt\hbox{$\scriptscriptstyle\frown$}}
\over v} }}} \right)}},\forall k,j \in {\cal K}.
\end{align}
Note that $\left\{ {{\tau _k} = {{\mathord{\buildrel{\lower3pt\hbox{$\scriptscriptstyle\frown$}}
\over \tau } }_k},{p_k} = {{{E_k}} \mathord{\left/
 {\vphantom {{{E_k}} {{{\mathord{\buildrel{\lower3pt\hbox{$\scriptscriptstyle\frown$}}
\over \tau } }_k}}}} \right.
 \kern-\nulldelimiterspace} {{{\mathord{\buildrel{\lower3pt\hbox{$\scriptscriptstyle\frown$}}
\over \tau } }_k}}},{\bf{v}} = {\bf{\mathord{\buildrel{\lower3pt\hbox{$\scriptscriptstyle\frown$}}
\over v} }}} \right\}$ is also feasible for problem \eqref{C201} and always exists, which yields
\begin{align}\label{NOMA_throughput}
R_{{\rm{NOMA}}}^{{\rm{ub}}} = \sum\limits_{k = 1}^K {{{\mathord{\buildrel{\lower3pt\hbox{$\scriptscriptstyle\frown$}}
\over \tau } }_k}} {\log _2}\left( {1 + \frac{{{E_k}{{\left| {{h_{d,k}} + {{{\bf{\mathord{\buildrel{\lower3pt\hbox{$\scriptscriptstyle\frown$}}
\over v} }}}^H}{{\bf{q}}_k}} \right|}^2}}}{{{{\mathord{\buildrel{\lower3pt\hbox{$\scriptscriptstyle\frown$}}
\over \tau } }_k}\left( {{\sigma ^2} + \sigma _r^2{{{\bf{\mathord{\buildrel{\lower3pt\hbox{$\scriptscriptstyle\frown$}}
\over v} }}}^H}{\bf{G\mathord{\buildrel{\lower3pt\hbox{$\scriptscriptstyle\frown$}}
\over v} }}} \right)}}} \right).
\end{align}
Thus, at the optimal solution of problem \eqref{C201}, it follows that $R_{{\rm{TDMA}}}^{{\rm{ub}}} \ge R_{{\rm{NOMA}}}^{{\rm{ub}}}$.

Given $R_{{\rm{TDMA}}}^{{\rm{ub}}} \le R_{{\rm{NOMA}}}^{{\rm{ub}}}$ and $R_{{\rm{TDMA}}}^{{\rm{ub}}} \ge R_{{\rm{NOMA}}}^{{\rm{ub}}}$, we have $R_{{\rm{TDMA}}}^{{\rm{ub}}} = R_{{\rm{NOMA}}}^{{\rm{ub}}}$. Regarding their optimal solutions, we have $\mathord{\buildrel{\lower3pt\hbox{$\scriptscriptstyle\frown$}}
\over \tau }  = \sum\nolimits_{k = 1}^K {{{\mathord{\buildrel{\lower3pt\hbox{$\scriptscriptstyle\smile$}}
\over \tau } }_k}}$, ${\bf{\mathord{\buildrel{\lower3pt\hbox{$\scriptscriptstyle\frown$}}
\over v} }} = {\bf{\mathord{\buildrel{\lower3pt\hbox{$\scriptscriptstyle\smile$}}
\over v} }}$, and ${{\mathord{\buildrel{\lower3pt\hbox{$\scriptscriptstyle\frown$}}
\over e} }_k} = {{\mathord{\buildrel{\lower3pt\hbox{$\scriptscriptstyle\smile$}}
\over e} }_k}$, where ${{\mathord{\buildrel{\lower3pt\hbox{$\scriptscriptstyle\frown$}}
\over e} }_k} = \mathord{\buildrel{\lower3pt\hbox{$\scriptscriptstyle\frown$}}
\over \tau } {{\mathord{\buildrel{\lower3pt\hbox{$\scriptscriptstyle\frown$}}
\over p} }_k}$ and ${{\mathord{\buildrel{\lower3pt\hbox{$\scriptscriptstyle\smile$}}
\over e} }_k} = {{\mathord{\buildrel{\lower3pt\hbox{$\scriptscriptstyle\smile$}}
\over \tau } }_k}{{\mathord{\buildrel{\lower3pt\hbox{$\scriptscriptstyle\smile$}}
\over p} }_k}$. Then, we add constraints \eqref{C11-e} and \eqref{C18-c} in problems \eqref{C200} and \eqref{C201}, respectively, which correspondingly results in problems \eqref{C11} and \eqref{C18}. By letting ${e_k} = \tau {p_k}$ in problem \eqref{C11} and ${e_k} = {\tau _k}{p_k}$ in problem \eqref{C18}, constraints \eqref{C11-e} and \eqref{C18-c} can be equivalently rewritten as
\begin{subequations}
\begin{align}
\label{new_constraints1} &\sum\limits_{k = 1}^K {{e_k}{{\bf{v}}^H}{{\bf{H}}_{r,k}}{\bf{v}}}  + \tau \sigma _r^2{\left\| {\bf{v}} \right\|^2} \le {P_r}\tau,  \\
\label{new_constraints2} &{e_k}{{\bf{v}}^H}{{\bf{H}}_{r,k}}{\bf{v}} + {\tau _k}\sigma _r^2{\left\| {\bf{v}} \right\|^2} \le {P_r}{\tau _k}, \forall k \in {\cal K}.
\end{align}
\end{subequations}
For any feasible solution $\left\{ {{{\tilde \tau }_k},{{\tilde e}_k},{\bf{\tilde v}}} \right\}$ for problem \eqref{C18}, we have
\begin{align}\label{inequity}
&\sum\limits_{k = 1}^K {{{\tilde \tau }_k}{{\log }_2}\left( {1 + \frac{{{{\tilde e}_k}{{\left| {{h_{d,k}} + {{\bf{v}}^H}{{\bf{q}}_k}} \right|}^2}}}{{{{\tilde \tau }_k}\left( {{\sigma ^2} + \sigma _r^2{{\bf{v}}^H}{\bf{Gv}}} \right)}}} \right)}\mathop  \le \limits^{\left( a \right)} \sum\limits_{k = 1}^K {{{\tilde \tau }_k}{{\log }_2}\left( {1 + \frac{{\sum\nolimits_{k = 1}^K {{{\tilde e}_k}{{\left| {{h_{d,k}} + {{\bf{v}}^H}{{\bf{q}}_k}} \right|}^2}} }}{{\sum\nolimits_{k = 1}^K {{{\tilde \tau }_k}\left( {{\sigma ^2} + \sigma _r^2{{\bf{v}}^H}{\bf{Gv}}} \right)} }}} \right)}\nonumber\\
& = \tilde \tau {\log _2}\left( {1 + \frac{{\sum\nolimits_{k = 1}^K {{{\tilde e}_k}{{\left| {{h_{d,k}} + {{\bf{v}}^H}{{\bf{q}}_k}} \right|}^2}} }}{{\tilde \tau \left( {{\sigma ^2} + \sigma _r^2{{\bf{v}}^H}{\bf{Gv}}} \right)}}} \right),
\end{align}
where (a) follows from the inequality of arithmetic and geometric means, and $\tilde \tau  = \sum\nolimits_{k = 1}^K {{{\tilde \tau }_k}}$. We have $\sum\nolimits_{k = 1}^K {{{\tilde e}_k}{{{\bf{\tilde v}}}^H}{{\bf{H}}_{r,k}}{\bf{\tilde v}}}  + \tilde \tau \sigma _r^2{\left\| {{\bf{\tilde v}}} \right\|^2} \le {P_r}\tilde \tau $ by taking the summation of all $\forall k \in {\cal K}$ in \eqref{new_constraints2}, which implies that $\left\{ {\tilde \tau ,{{\tilde e}_k},{\bf{\tilde v}}} \right\}$ is guaranteed as a feasible solution for \eqref{C11}. As such, the optimal value of problem \eqref{C11} is no smaller than that of \eqref{C18}, i.e., $R_{{\rm{NOMA}}}^* \ge R_{{\rm{TDMA}}}^{\left( {{\rm{lb}}} \right)*}$. If
\begin{align}\label{inequity_condition}
\frac{{{{\tilde e}_k}{{\left| {{h_{d,k}} + {{\bf{v}}^H}{{\bf{q}}_k}} \right|}^2}}}{{{{\tilde \tau }_k}\left( {{\sigma ^2} + \sigma _r^2{{\bf{v}}^H}{\bf{Gv}}} \right)}} \ne \frac{{{{\tilde e}_j}{{\left| {{h_{d,j}} + {{\bf{v}}^H}{{\bf{q}}_j}} \right|}^2}}}{{{{\tilde \tau }_j}\left( {{\sigma ^2} + \sigma _r^2{{\bf{v}}^H}{\bf{Gv}}} \right)}},\exists k,j \in {\cal K},
\end{align}
we have $R_{{\rm{NOMA}}}^* > R_{{\rm{TDMA}}}^{\left( {{\rm{lb}}} \right)*}$, i.e., the optimal value of \eqref{C11} is strictly larger than that of \eqref{C18}.

When $\left\{ {{{\mathord{\buildrel{\lower3pt\hbox{$\scriptscriptstyle\smile$}}
\over \tau } }_k},{{\mathord{\buildrel{\lower3pt\hbox{$\scriptscriptstyle\smile$}}
\over e} }_k},{\bf{\mathord{\buildrel{\lower3pt\hbox{$\scriptscriptstyle\smile$}}
\over v} }}} \right\}$ satisfies constraint \eqref{new_constraints2}, it is also an optimal solution for problem \eqref{C11} and thus $R_{{\rm{TDMA}}}^{\left( {{\rm{lb}}} \right)*} = R_{{\rm{TDMA}}}^{{\rm{ub}}}$. In this case, $\mathord{\buildrel{\lower3pt\hbox{$\scriptscriptstyle\frown$}}
\over \tau }  = \sum\nolimits_{k = 1}^K {{{\mathord{\buildrel{\lower3pt\hbox{$\scriptscriptstyle\smile$}}
\over \tau } }_k}}$, ${\bf{\mathord{\buildrel{\lower3pt\hbox{$\scriptscriptstyle\frown$}}
\over v} }} = {\bf{\mathord{\buildrel{\lower3pt\hbox{$\scriptscriptstyle\smile$}}
\over v} }}$, and ${{\mathord{\buildrel{\lower3pt\hbox{$\scriptscriptstyle\frown$}}
\over e} }_k} = {{\mathord{\buildrel{\lower3pt\hbox{$\scriptscriptstyle\smile$}}
\over e} }_k}$ also satisfy all constraints in problem \eqref{C11}, which indicates that $R_{{\rm{TDMA}}}^{\left( {{\rm{lb}}} \right)*} = R_{{\rm{TDMA}}}^{{\rm{ub}}} = R_{{\rm{NOMA}}}^{{\rm{ub}}} = R_{{\rm{NOMA}}}^*$. Therefore, \eqref{sufficient_condition} is a sufficient condition for $R_{{\rm{TDMA}}}^{\left( {{\rm{lb}}} \right)*} = R_{{\rm{NOMA}}}^*$, which thus completes the proof.
\bibliographystyle{IEEEtran}
\bibliography{IEEEabrv,myref}


\end{document}